\newcommand{\abstractText}{\noindent
We show that graph-theoretic problem CLIQUE
can't be solved in polynomial time by any deterministic TM. This upgrades
the well-known \emph{monotone} unsolvability
and eventually implies  $\textbf{P}\neq \textbf{NP}$ as
CLIQUE is NP-complete. As compared to familiar Razborov-style arguments 
 we use Boolean logic coupled with standard circuit formalism.
}
\newtheorem{theorem}{Theorem}
\newtheorem{axiom}[theorem]{Axiom}
\newtheorem{conjecture}[theorem]{Conjecture}
\newtheorem{corollary}[theorem]{Corollary}
\newtheorem{definition}[theorem]{Definition}
\newtheorem{example}[theorem]{Example}
\newtheorem{exercise}[theorem]{Exercise}
\newtheorem{lemma}[theorem]{Lemma}
\newtheorem{proposition}[theorem]{Proposition}
\newtheorem{remark}[theorem]{Remark}
\newtheorem{partial solution}[theorem]{Partial Solution}
\newenvironment{proof}[1][Proof]{\textbf{#1.} }{\ \rule{0.5em}{0.5em}}
\chardef\@x10\chardef\@xv60
\def\tcitime{
\def\@time{%
  \@minute\time\@hour\@minute\divide\@hour\@xv
  \ifnum\@hour<\@x 0\fi\the\@hour:%
  \multiply\@hour\@xv\advance\@minute-\@hour
  \ifnum\@minute<\@x 0\fi\the\@minute
  }}%
\def\QCTOpt[#1]#2{%
  \def\QCTOptB{#1}
  \def\QCTOptA{#2}
}
\def\QCTNOpt#1{%
  \def\QCTOptA{#1}
  \let\QCTOptB\empty
}
\def\Qct{%
  \@ifnextchar[{%
    \QCTOpt}{\QCTNOpt}
}
\def\QCBOpt[#1]#2{%
  \def\QCBOptB{#1}
  \def\QCBOptA{#2}
}
\def\QCBNOpt#1{%
  \def\QCBOptA{#1}
  \let\QCBOptB\empty
}
\def\Qcb{%
  \@ifnextchar[{%
    \QCBOpt}{\QCBNOpt}
}
\def\PrepCapArgs{%
  \ifx\QCBOptA\empty
    \ifx\QCTOptA\empty
      {}%
    \else
      \ifx\QCTOptB\empty
        {\QCTOptA}%
      \else
        [\QCTOptB]{\QCTOptA}%
      \fi
    \fi
  \else
    \ifx\QCBOptA\empty
      {}%
    \else
      \ifx\QCBOptB\empty
        {\QCBOptA}%
      \else
        [\QCBOptB]{\QCBOptA}%
      \fi
    \fi
  \fi
}
\def\GRAPHICSPS#1{%
 \ifcase\GRAPHICSTYPE%\GRAPHICSTYPE=0
   \special{ps: #1}%
 \or%\GRAPHICSTYPE=1
   \special{language "PS", include "#1"}%
%%%\or%\GRAPHICSTYPE=2
%%%  #1%
 \fi
}%
\def\graffile#1#2#3#4{%
%%% \ifnum\GRAPHICSTYPE=\tw@
%%%  %Following if using psfig
%%%  \@ifundefined{psfig}{\input psfig.tex}{}%
%%%  \psfig{file=#1, height=#3, width=#2}%
%%% \else
  %Following for all others
  % JCS - added BOXTHEFRAME, see below
    \bgroup
    \leavevmode
    \@ifundefined{bbl@deactivate}{\def~{\string~}}{\activesoff}
    \raise -#4 \BOXTHEFRAME{%
        \hbox to #2{\raise #3\hbox to #2{\null #1\hfil}}}%
    \egroup
}%
\def\draftbox#1#2#3#4{%
 \leavevmode\raise -#4 \hbox{%
  \frame{\rlap{\protect\tiny #1}\hbox to #2%
   {\vrule height#3 width\z@ depth\z@\hfil}%
  }%
 }%
}%
\newif\ifwasdraft
\def\GRAPHIC#1#2#3#4#5{%
 \ifnum\draft=\@ne\draftbox{#2}{#3}{#4}{#5}%
  \else\graffile{#1}{#3}{#4}{#5}%
  \fi
 }%
\def\addtoLaTeXparams#1{%
    \edef\LaTeXparams{\LaTeXparams #1}}%
\newif\ifBoxFrame \BoxFramefalse
\newif\ifOverFrame \OverFramefalse
\newif\ifUnderFrame \UnderFramefalse
\def\BOXTHEFRAME#1{%
   \hbox{%
      \ifBoxFrame
         \frame{#1}%
      \else
         {#1}%
      \fi
   }%
}
\def\doFRAMEparams#1{\BoxFramefalse\OverFramefalse\UnderFramefalse\readFRAMEparams#1\end}%
\def\readFRAMEparams#1{%
 \ifx#1\end%
  \let\next=\relax
  \else
  \ifx#1i\dispkind=\z@\fi
  \ifx#1d\dispkind=\@ne\fi
  \ifx#1f\dispkind=\tw@\fi
  \ifx#1t\addtoLaTeXparams{t}\fi
  \ifx#1b\addtoLaTeXparams{b}\fi
  \ifx#1p\addtoLaTeXparams{p}\fi
  \ifx#1h\addtoLaTeXparams{h}\fi
  \ifx#1X\BoxFrametrue\fi
  \ifx#1O\OverFrametrue\fi
  \ifx#1U\UnderFrametrue\fi
  \ifx#1w
    \ifnum\draft=1\wasdrafttrue\else\wasdraftfalse\fi
    \draft=\@ne
  \fi
  \let\next=\readFRAMEparams
  \fi
 \next
 }%
\def\IFRAME#1#2#3#4#5#6{%
      \bgroup
      \let\QCTOptA\empty
      \let\QCTOptB\empty
      \let\QCBOptA\empty
      \let\QCBOptB\empty
      #6%
      \parindent=0pt%
      \leftskip=0pt
      \rightskip=0pt
      \setbox0 = \hbox{\QCBOptA}%
      \@tempdima = #1\relax
      \ifOverFrame
          % Do this later
          \typeout{This is not implemented yet}%
          \show\HELP
      \else
         \ifdim\wd0>\@tempdima
            \advance\@tempdima by \@tempdima
            \ifdim\wd0 >\@tempdima
               \textwidth=\@tempdima
               \setbox1 =\vbox{%
                  \noindent\hbox to \@tempdima{\hfill\GRAPHIC{#5}{#4}{#1}{#2}{#3}\hfill}\\%
                  \noindent\hbox to \@tempdima{\parbox[b]{\@tempdima}{\QCBOptA}}%
               }%
               \wd1=\@tempdima
            \else
               \textwidth=\wd0
               \setbox1 =\vbox{%
                 \noindent\hbox to \wd0{\hfill\GRAPHIC{#5}{#4}{#1}{#2}{#3}\hfill}\\%
                 \noindent\hbox{\QCBOptA}%
               }%
               \wd1=\wd0
            \fi
         \else
            %\show\BBB
            \ifdim\wd0>0pt
              \hsize=\@tempdima
              \setbox1 =\vbox{%
                \unskip\GRAPHIC{#5}{#4}{#1}{#2}{0pt}%
                \break
                \unskip\hbox to \@tempdima{\hfill \QCBOptA\hfill}%
              }%
              \wd1=\@tempdima
           \else
              \hsize=\@tempdima
              \setbox1 =\vbox{%
                \unskip\GRAPHIC{#5}{#4}{#1}{#2}{0pt}%
              }%
              \wd1=\@tempdima
           \fi
         \fi
         \@tempdimb=\ht1
         \advance\@tempdimb by \dp1
         \advance\@tempdimb by -#2%
         \advance\@tempdimb by #3%
         \leavevmode
         \raise -\@tempdimb \hbox{\box1}%
      \fi
      \egroup%
}%
\def\DFRAME#1#2#3#4#5{%
 \begin{center}
     \let\QCTOptA\empty
     \let\QCTOptB\empty
     \let\QCBOptA\empty
     \let\QCBOptB\empty
     \ifOverFrame 
        #5\QCTOptA\par
     \fi
     \GRAPHIC{#4}{#3}{#1}{#2}{\z@}
     \ifUnderFrame 
        \nobreak\par\nobreak#5\QCBOptA
     \fi
 \end{center}%
 }%
\def\FFRAME#1#2#3#4#5#6#7{%
 \begin{figure}[#1]%
  \let\QCTOptA\empty
  \let\QCTOptB\empty
  \let\QCBOptA\empty
  \let\QCBOptB\empty
  \ifOverFrame
    #4
    \ifx\QCTOptA\empty
    \else
      \ifx\QCTOptB\empty
        \caption{\QCTOptA}%
      \else
        \caption[\QCTOptB]{\QCTOptA}%
      \fi
    \fi
    \ifUnderFrame\else
      \label{#5}%
    \fi
  \else
    \UnderFrametrue%
  \fi
  \begin{center}\GRAPHIC{#7}{#6}{#2}{#3}{\z@}\end{center}%
  \ifUnderFrame
    #4
    \ifx\QCBOptA\empty
      \caption{}%
    \else
      \ifx\QCBOptB\empty
        \caption{\QCBOptA}%
      \else
        \caption[\QCBOptB]{\QCBOptA}%
      \fi
    \fi
    \label{#5}%
  \fi
  \end{figure}%
 }%
\def\makeactives{
  \catcode`\"=\active
  \catcode`\;=\active
  \catcode`\:=\active
  \catcode`\'=\active
  \catcode`\~=\active
}
   \gdef\activesoff{%
      \def"{\string"}
      \def;{\string;}
      \def:{\string:}
      \def'{\string'}
      \def~{\string~}
      %\bbl@deactivate{"}%
      %\bbl@deactivate{;}%
      %\bbl@deactivate{:}%
      %\bbl@deactivate{'}%
    }
\def\FRAME#1#2#3#4#5#6#7#8{%
 \bgroup
 \ifnum\draft=\@ne
   \wasdrafttrue
 \else
   \wasdraftfalse%
 \fi
 \def\LaTeXparams{}%
 \dispkind=\z@
 \def\LaTeXparams{}%
 \doFRAMEparams{#1}%
 \ifnum\dispkind=\z@\IFRAME{#2}{#3}{#4}{#7}{#8}{#5}\else
  \ifnum\dispkind=\@ne\DFRAME{#2}{#3}{#7}{#8}{#5}\else
   \ifnum\dispkind=\tw@
    \edef\@tempa{\noexpand\FFRAME{\LaTeXparams}}%
    \@tempa{#2}{#3}{#5}{#6}{#7}{#8}%
    \fi
   \fi
  \fi
  \ifwasdraft\draft=1\else\draft=0\fi{}%
  \egroup
 }%
\def\TEXUX#1{"texux"}
\long\def\QQQ#1#2{%
     \long\expandafter\def\csname#1\endcsname{#2}}%
\long\def\QQA#1#2{}%
\def\QTR#1#2{{\csname#1\endcsname #2}}%(gp) Is this the best?
\def\EXPAND#1[#2]#3{}%
\def\NOEXPAND#1[#2]#3{}%
\def\LaTeXparent#1{}%
\def\ChildStyles#1{}%
\def\ChildDefaults#1{}%
\def\QTagDef#1#2#3{}%
  \providecommand{\UNICODE}[2][]{}
\def\QQfnmark#1{\footnotemark}
 \def\abstract{%
  \if@twocolumn
   \section*{Abstract (Not appropriate in this style!)}%
   \else \small 
   \begin{center}{\bf Abstract\vspace{-.5em}\vspace{\z@}}\end{center}%
   \quotation 
   \fi
  }%
   \def\registered{\relax\ifmmode{}\r@gistered
                    \else$\m@th\r@gistered$\fi}%
 \def\r@gistered{^{\ooalign
  {\hfil\raise.07ex\hbox{$\scriptstyle\rm\text{R}$}\hfil\crcr
  \mathhexbox20D}}}}{}%
\newdimen\theight
\def\Column{%
 \vadjust{\setbox\z@=\hbox{\scriptsize\quad\quad tcol}%
  \theight=\ht\z@\advance\theight by \dp\z@\advance\theight by \lineskip
  \kern -\theight \vbox to \theight{%
   \rightline{\rlap{\box\z@}}%
   \vss
   }%
  }%
 }%
\def\qed{%
 \ifhmode\unskip\nobreak\fi\ifmmode\ifinner\else\hskip5\p@\fi\fi
 \hbox{\hskip5\p@\vrule width4\p@ height6\p@ depth1.5\p@\hskip\p@}%
 }%
\def\miss{\hbox{\vrule height2\p@ width 2\p@ depth\z@}}%
\def\tcol#1{{\baselineskip=6\p@ \vcenter{#1}} \Column}  %
\def\newfmtname{LaTeX2e}
  \DeclareOldFontCommand{\rm}{\normalfont\rmfamily}{\mathrm}
  \DeclareOldFontCommand{\sf}{\normalfont\sffamily}{\mathsf}
  \DeclareOldFontCommand{\tt}{\normalfont\ttfamily}{\mathtt}
  \DeclareOldFontCommand{\bf}{\normalfont\bfseries}{\mathbf}
  \DeclareOldFontCommand{\it}{\normalfont\itshape}{\mathit}
  \DeclareOldFontCommand{\sl}{\normalfont\slshape}{\@nomath\sl}
  \DeclareOldFontCommand{\sc}{\normalfont\scshape}{\@nomath\sc}
\def\alpha{{\Greekmath 010B}}%
\def\beta{{\Greekmath 010C}}%
\def\gamma{{\Greekmath 010D}}%
\def\delta{{\Greekmath 010E}}%
\def\epsilon{{\Greekmath 010F}}%
\def\zeta{{\Greekmath 0110}}%
\def\eta{{\Greekmath 0111}}%
\def\theta{{\Greekmath 0112}}%
\def\iota{{\Greekmath 0113}}%
\def\kappa{{\Greekmath 0114}}%
\def\lambda{{\Greekmath 0115}}%
\def\mu{{\Greekmath 0116}}%
\def\nu{{\Greekmath 0117}}%
\def\xi{{\Greekmath 0118}}%
\def\pi{{\Greekmath 0119}}%
\def\rho{{\Greekmath 011A}}%
\def\sigma{{\Greekmath 011B}}%
\def\tau{{\Greekmath 011C}}%
\def\upsilon{{\Greekmath 011D}}%
\def\phi{{\Greekmath 011E}}%
\def\chi{{\Greekmath 011F}}%
\def\psi{{\Greekmath 0120}}%
\def\omega{{\Greekmath 0121}}%
\def\varepsilon{{\Greekmath 0122}}%
\def\vartheta{{\Greekmath 0123}}%
\def\varpi{{\Greekmath 0124}}%
\def\varrho{{\Greekmath 0125}}%
\def\varsigma{{\Greekmath 0126}}%
\def\varphi{{\Greekmath 0127}}%
\def\nabla{{\Greekmath 0272}}
\def\FindBoldGroup{%
   {\setbox0=\hbox{$\mathbf{x\global\edef\theboldgroup{\the\mathgroup}}$}}%
}
\def\Greekmath#1#2#3#4{%
    \if@compatibility
        \ifnum\mathgroup=\symbold
           \mathchoice{\mbox{\boldmath$\displaystyle\mathchar"#1#2#3#4$}}%
                      {\mbox{\boldmath$\textstyle\mathchar"#1#2#3#4$}}%
                      {\mbox{\boldmath$\scriptstyle\mathchar"#1#2#3#4$}}%
                      {\mbox{\boldmath$\scriptscriptstyle\mathchar"#1#2#3#4$}}%
        \else
           \mathchar"#1#2#3#4% 
        \fi 
    \else 
        \FindBoldGroup
        \ifnum\mathgroup=\theboldgroup % For 2e
           \mathchoice{\mbox{\boldmath$\displaystyle\mathchar"#1#2#3#4$}}%
                      {\mbox{\boldmath$\textstyle\mathchar"#1#2#3#4$}}%
                      {\mbox{\boldmath$\scriptstyle\mathchar"#1#2#3#4$}}%
                      {\mbox{\boldmath$\scriptscriptstyle\mathchar"#1#2#3#4$}}%
        \else
           \mathchar"#1#2#3#4% 
        \fi     	    
	  \fi}
\newif\ifGreekBold  \GreekBoldfalse
\let\SAVEPBF=\pbf
\def\pbf{\GreekBoldtrue\SAVEPBF}%
  \newcounter{equationnumber}  
  \def\mathletters{%
     \addtocounter{equation}{1}
     \edef\@currentlabel{\theequation}%
     \setcounter{equationnumber}{\c@equation}
     \setcounter{equation}{0}%
     \edef\theequation{\@currentlabel\noexpand\alph{equation}}%
  }
    \def\BibTeX{{\rm B\kern-.05em{\sc i\kern-.025em b}\kern-.08em
                 T\kern-.1667em\lower.7ex\hbox{E}\kern-.125emX}}}{}%
\def\AmS{{\protect\usefont{OMS}{cmsy}{m}{n}%
                A\kern-.1667em\lower.5ex\hbox{M}\kern-.125emS}}}{}%
\def\@@eqncr{\let\@tempa\relax
    \ifcase\@eqcnt \def\@tempa{& & &}\or \def\@tempa{& &}%
      \else \def\@tempa{&}\fi
     \@tempa
     \if@eqnsw
        \iftag@
           \@taggnum
        \else
           \@eqnnum\stepcounter{equation}%
        \fi
     \fi
     \global\tag@false
     \global\@eqnswtrue
     \global\@eqcnt\z@\cr}
\def\TCItag{\@ifnextchar*{\@TCItagstar}{\@TCItag}}
\def\@TCItag#1{%
    \global\tag@true
    \global\def\@taggnum{(#1)}}
\def\@TCItagstar*#1{%
    \global\tag@true
    \global\def\@taggnum{#1}}
\def\dfrac#1#2{{\displaystyle {#1 \over #2}}}%
\def\binom#1#2{{#1 \choose #2}}%
\def\QATOP#1#2{{#1 \atop #2}}%
\let\DOTSI\relax
\def\RIfM@{\relax\ifmmode}%
\def\FN@{\futurelet\next}%
\def\iint{\DOTSI\intno@\tw@\FN@\ints@}%
\def\iiint{\DOTSI\intno@\thr@@\FN@\ints@}%
\def\iiiint{\DOTSI\intno@4 \FN@\ints@}%
\def\idotsint{\DOTSI\intno@\z@\FN@\ints@}%
\def\ints@{\findlimits@\ints@@}%
\newif\iflimtoken@
\newif\iflimits@
\def\findlimits@{\limtoken@true\ifx\next\limits\limits@true
 \else\ifx\next\nolimits\limits@false\else
 \limtoken@false\ifx\ilimits@\nolimits\limits@false\else
 \ifinner\limits@false\else\limits@true\fi\fi\fi\fi}%
\def\multint@{\int\ifnum\intno@=\z@\intdots@                          %1
 \else\intkern@\fi                                                    %2
 \ifnum\intno@>\tw@\int\intkern@\fi                                   %3
 \ifnum\intno@>\thr@@\int\intkern@\fi                                 %4
 \int}%                                                               %5
\def\multintlimits@{\intop\ifnum\intno@=\z@\intdots@\else\intkern@\fi
 \ifnum\intno@>\tw@\intop\intkern@\fi
 \ifnum\intno@>\thr@@\intop\intkern@\fi\intop}%
\def\intic@{%
    \mathchoice{\hskip.5em}{\hskip.4em}{\hskip.4em}{\hskip.4em}}%
\def\negintic@{\mathchoice
 {\hskip-.5em}{\hskip-.4em}{\hskip-.4em}{\hskip-.4em}}%
\def\ints@@{\iflimtoken@                                              %1
 \def\ints@@@{\iflimits@\negintic@
   \mathop{\intic@\multintlimits@}\limits                             %2
  \else\multint@\nolimits\fi                                          %3
  \eat@}%                                                             %4
 \else                                                                %5
 \def\ints@@@{\iflimits@\negintic@
  \mathop{\intic@\multintlimits@}\limits\else
  \multint@\nolimits\fi}\fi\ints@@@}%
\def\intkern@{\mathchoice{\!\!\!}{\!\!}{\!\!}{\!\!}}%
\def\plaincdots@{\mathinner{\cdotp\cdotp\cdotp}}%
\def\intdots@{\mathchoice{\plaincdots@}%
 {{\cdotp}\mkern1.5mu{\cdotp}\mkern1.5mu{\cdotp}}%
 {{\cdotp}\mkern1mu{\cdotp}\mkern1mu{\cdotp}}%
 {{\cdotp}\mkern1mu{\cdotp}\mkern1mu{\cdotp}}}%
\def\RIfM@{\relax\protect\ifmmode}
\def\text{\RIfM@\expandafter\text@\else\expandafter\mbox\fi}
\let\nfss@text\text
\def\text@#1{\mathchoice
   {\textdef@\displaystyle\f@size{#1}}%
   {\textdef@\textstyle\tf@size{\firstchoice@false #1}}%
   {\textdef@\textstyle\sf@size{\firstchoice@false #1}}%
   {\textdef@\textstyle \ssf@size{\firstchoice@false #1}}%
   \glb@settings}
\def\textdef@#1#2#3{\hbox{{%
                    \everymath{#1}%
                    \let\f@size#2\selectfont
                    #3}}}
\newif\iffirstchoice@
\def\Let@{\relax\iffalse{\fi\let\\=\cr\iffalse}\fi}%
\def\vspace@{\def\vspace##1{\crcr\noalign{\vskip##1\relax}}}%
\def\multilimits@{\bgroup\vspace@\Let@
 \baselineskip\fontdimen10 \scriptfont\tw@
 \advance\baselineskip\fontdimen12 \scriptfont\tw@
 \lineskip\thr@@\fontdimen8 \scriptfont\thr@@
 \lineskiplimit\lineskip
 \vbox\bgroup\ialign\bgroup\hfil$\m@th\scriptstyle{##}$\hfil\crcr}%
\def\Sb{_\multilimits@}%
\def\endSb{\crcr\egroup\egroup\egroup}%
\def\Sp{^\multilimits@}%
\newdimen\ex@
\def\rightarrowfill@#1{$#1\m@th\mathord-\mkern-6mu\cleaders
 \hbox{$#1\mkern-2mu\mathord-\mkern-2mu$}\hfill
 \mkern-6mu\mathord\rightarrow$}%
\def\leftarrowfill@#1{$#1\m@th\mathord\leftarrow\mkern-6mu\cleaders
 \hbox{$#1\mkern-2mu\mathord-\mkern-2mu$}\hfill\mkern-6mu\mathord-$}%
\def\leftrightarrowfill@#1{$#1\m@th\mathord\leftarrow
\mkern-6mu\cleaders
 \hbox{$#1\mkern-2mu\mathord-\mkern-2mu$}\hfill
 \mkern-6mu\mathord\rightarrow$}%
\def\overrightarrow{\mathpalette\overrightarrow@}%
\def\overrightarrow@#1#2{\vbox{\ialign{##\crcr\rightarrowfill@#1\crcr
 \noalign{\kern-\ex@\nointerlineskip}$\m@th\hfil#1#2\hfil$\crcr}}}%
\def\overleftarrow{\mathpalette\overleftarrow@}%
\def\overleftarrow@#1#2{\vbox{\ialign{##\crcr\leftarrowfill@#1\crcr
 \noalign{\kern-\ex@\nointerlineskip}$\m@th\hfil#1#2\hfil$\crcr}}}%
\def\overleftrightarrow{\mathpalette\overleftrightarrow@}%
\def\overleftrightarrow@#1#2{\vbox{\ialign{##\crcr
   \leftrightarrowfill@#1\crcr
 \noalign{\kern-\ex@\nointerlineskip}$\m@th\hfil#1#2\hfil$\crcr}}}%
\def\underrightarrow{\mathpalette\underrightarrow@}%
\def\underrightarrow@#1#2{\vtop{\ialign{##\crcr$\m@th\hfil#1#2\hfil
  $\crcr\noalign{\nointerlineskip}\rightarrowfill@#1\crcr}}}%
\def\underleftarrow{\mathpalette\underleftarrow@}%
\def\underleftarrow@#1#2{\vtop{\ialign{##\crcr$\m@th\hfil#1#2\hfil
  $\crcr\noalign{\nointerlineskip}\leftarrowfill@#1\crcr}}}%
\def\underleftrightarrow{\mathpalette\underleftrightarrow@}%
\def\underleftrightarrow@#1#2{\vtop{\ialign{##\crcr$\m@th
  \hfil#1#2\hfil$\crcr
 \noalign{\nointerlineskip}\leftrightarrowfill@#1\crcr}}}%
\def\qopnamewl@#1{\mathop{\operator@font#1}\nlimits@}
\let\nlimits@\displaylimits
\def\setboxz@h{\setbox\z@\hbox}
\def\varlim@#1#2{\mathop{\vtop{\ialign{##\crcr
 \hfil$#1\m@th\operator@font lim$\hfil\crcr
 \noalign{\nointerlineskip}#2#1\crcr
 \noalign{\nointerlineskip\kern-\ex@}\crcr}}}}
 \def\rightarrowfill@#1{\m@th\setboxz@h{$#1-$}\ht\z@\z@
  $#1\copy\z@\mkern-6mu\cleaders
  \hbox{$#1\mkern-2mu\box\z@\mkern-2mu$}\hfill
  \mkern-6mu\mathord\rightarrow$}
\def\leftarrowfill@#1{\m@th\setboxz@h{$#1-$}\ht\z@\z@
  $#1\mathord\leftarrow\mkern-6mu\cleaders
  \hbox{$#1\mkern-2mu\copy\z@\mkern-2mu$}\hfill
  \mkern-6mu\box\z@$}
\def\projlim{\qopnamewl@{proj\,lim}}
\def\injlim{\qopnamewl@{inj\,lim}}
\def\varinjlim{\mathpalette\varlim@\rightarrowfill@}
\def\varprojlim{\mathpalette\varlim@\leftarrowfill@}
\def\varliminf{\mathpalette\varliminf@{}}
\def\varliminf@#1{\mathop{\underline{\vrule\@depth.2\ex@\@width\z@
   \hbox{$#1\m@th\operator@font lim$}}}}
\def\varlimsup{\mathpalette\varlimsup@{}}
\def\varlimsup@#1{\mathop{\overline
  {\hbox{$#1\m@th\operator@font lim$}}}}
\def\align{\@verbatim \frenchspacing\@vobeyspaces \@alignverbatim
You are using the "align" environment in a style in which it is not defined.}
\let\csname endalign*\endcsname =\endtrivlist
\def\alignat{\@verbatim \frenchspacing\@vobeyspaces \@alignatverbatim
You are using the "alignat" environment in a style in which it is not defined.}
\let\csname endalignat*\endcsname =\endtrivlist
\def\xalignat{\@verbatim \frenchspacing\@vobeyspaces \@xalignatverbatim
You are using the "xalignat" environment in a style in which it is not defined.}
\let\csname endxalignat*\endcsname =\endtrivlist
\def\gather{\@verbatim \frenchspacing\@vobeyspaces \@gatherverbatim
You are using the "gather" environment in a style in which it is not defined.}
\let\csname endgather*\endcsname =\endtrivlist
\def\multiline{\@verbatim \frenchspacing\@vobeyspaces \@multilineverbatim
You are using the "multiline" environment in a style in which it is not defined.}
\let\csname endmultiline*\endcsname =\endtrivlist
\def\arrax{\@verbatim \frenchspacing\@vobeyspaces \@arraxverbatim
You are using a type of "array" construct that is only allowed in AmS-LaTeX.}
\def\tabulax{\@verbatim \frenchspacing\@vobeyspaces \@tabulaxverbatim
You are using a type of "tabular" construct that is only allowed in AmS-LaTeX.}
\let\csname endarrax*\endcsname =\endtrivlist
\let\csname endtabulax*\endcsname =\endtrivlist
 \def\endequation{%
     \ifmmode\ifinner % FLEQN hack
      \iftag@
        \addtocounter{equation}{-1} % undo the increment made in the begin part
        $\hfil
           \displaywidth\linewidth\@taggnum\egroup \endtrivlist
        \global\tag@false
        \global\@ignoretrue   
      \else
        $\hfil
           \displaywidth\linewidth\@eqnnum\egroup \endtrivlist
        \global\tag@false
        \global\@ignoretrue 
      \fi
     \else   
      \iftag@
        \addtocounter{equation}{-1} % undo the increment made in the begin part
        \eqno \hbox{\@taggnum}
        \global\tag@false%
        $$\global\@ignoretrue
      \else
        \eqno \hbox{\@eqnnum}% $$ BRACE MATCHING HACK
        $$\global\@ignoretrue
      \fi
     \fi\fi
 } 
 \newif\iftag@ \tag@false
 \def\TCItag{\@ifnextchar*{\@TCItagstar}{\@TCItag}}
 \def\@TCItag#1{%
     \global\tag@true
     \global\def\@taggnum{(#1)}}
 \def\@TCItagstar*#1{%
     \global\tag@true
     \global\def\@taggnum{#1}}
     \def\tag{\@ifnextchar*{\@tagstar}{\@tag}}
     \def\@tag#1{%
         \global\tag@true
         \global\def\@taggnum{(#1)}}
     \def\@tagstar*#1{%
         \global\tag@true
         \global\def\@taggnum{#1}}
\begin{document}

\title{On P Versus NP}
\author{L. Gordeev}

%%%%%%%%%%%%
% Abstract %
%%%%%%%%%%%%

\twocolumn[
  \begin{@twocolumnfalse}
    \maketitle
    \begin{abstract}
      \abstractText
      \newline
      \newline
    \end{abstract}
  \end{@twocolumnfalse}
]

%%%%%%%%%%%
% Article %
%%%%%%%%%%%

\section{\protect\large Introduction}

Our proof is based on the following two observations, where CLIQUE$_{m,k}$
says that a given graph on $\leq m$ vertices has a clique of $k$ vertices.

1. Computational complexity of Boolean circuits is polynomial in the
complexity of \emph{De\thinspace Morgan normal} (abbr.: DMN) $(\vee ,\wedge
) $-circuits $C^{\pm }$ allowing negated inputs (literals) $\lnot v_{i}$
along with variables $v_{i}$.

2. For sufficiently large natural numbers $m=k^{4}$, the size of any given
DMN circuit solution $C^{\pm }$ of CLIQUE$_{m,k}$ is exponential in $m$. To
prove this claim we consider disjoint pairs of graphs on $\leq m$ vertices
(called double graphs below) $D=\left\langle G\left( \varepsilon \right)
,G^{\prime }\right\rangle $, for chosen assignments $\varepsilon :\left[ n%
\right] \rightarrow \left\{ 0,1\right\} $, such that 
\begin{equation*}
\begin{array}{l}
G\left( \varepsilon \right) =\left\{ \pi (i):i\in \left[ n\right]
\&\,\varepsilon \left( i\right) =1\right\} \ \text{and\newline
} \\ 
G^{\prime }\subseteq \left\{ \pi (i):i\in \left[ n\right] \&\,\varepsilon
\left( i\right) =0\right\}
\end{array}
\end{equation*}
where $n:=\binom{m}{2}$ is the total number of edges involved and $\pi $ is
standard 1--1 enumeration thereof. We'll write $D^{+}$ and $D^{-}$ for $%
G\left( \varepsilon \right) $ and $G^{\prime } $, respectively, and denote
by $\mathcal{D}$ the set of all double graphs assuming that every $D\in 
\mathcal{D}$ is represented by a DMN circuit

\begin{equation*}
C_{D}^{\pm }:=\!\underset{\pi (i)\in D^{+}}{\bigwedge }\!v_{i}\wedge \!\!%
\underset{\pi (i)\in D^{-}}{\bigwedge }\lnot v_{i}.
\end{equation*}

For any $\varepsilon :\left[ n\right] \rightarrow \left\{ 0,1\right\} $ we
denote by $C^{\pm }\left( \varepsilon \right) $ a variable-free Boolean
circuit defined by replacing all inputs $v_{i}$ and $\lnot v_{i}$ occurring
in $C^{\pm }$ by $\varepsilon \left( i\right) $ and $1\!-\!\varepsilon
\left( i\right) $, respectively, and let $\left\| C^{\pm }\left( \varepsilon
\right) \right\| \in \left\{ 0,1\right\} $ be Boolean value of $C^{\pm
}\left( \varepsilon \right) $.

We say that CLIQUE$_{m,k}$ ($k<m$) is \emph{decidable} by $C^{\pm }$ iff 
\begin{equation*}
\forall \varepsilon \left( \left\| C^{\pm }\left( \varepsilon \right)
\right\| =1\Leftrightarrow CLIQ\left( \varepsilon \right) \right) \quad
\qquad \qquad \qquad \ \left( 1\right)
\end{equation*}
where 
\begin{equation*}
CLIQ\left( \varepsilon \right) :\Leftrightarrow ``G\left( \varepsilon
\right) \ has\ a\ clique\ of\ k\ vertices".
\end{equation*}

Now consider any Boolean circuit $C$ with gates $\vee $, $\wedge $ and/or $%
\lnot $, whose inputs are supplied with Boolean constants and variables $%
v_{i}$ ($i\in \left[ n\right] $). Corresponding DNM circuit $C^{\pm }$
arises by applying to gates of $C$ the well-known De\thinspace Morgan
rewriting rules 1--4:

\begin{enumerate}
\item  $\lnot \,1\hookrightarrow 0 \, ,\ \lnot \,0 \hookrightarrow 1 .$

\item  $\lnot \left( a\vee b\right) \hookrightarrow \lnot \,a\wedge \lnot \,
b.$

\item  $\lnot \left( a\wedge b\right) \hookrightarrow \lnot \,a\vee \lnot \,
b.$

\item  $\lnot \lnot \,a\hookrightarrow a.$
\end{enumerate}

It is a folklore that circuit size of $C^{\pm }$\ is at most that of $C$.
Namely, circuit structure of $C^{\pm }$ arises by successively replacing $%
\lnot \,1$ by $0$, $\lnot \,0$ by $1$ and gates $\vee $ and $\wedge $
occurring in the scope of $\lnot $ in $C$ by the complementary $\wedge $ and 
$\vee $, respectively, while deleting gates $\lnot $ in question and all
double negations $\lnot \lnot $. For any chosen $\varepsilon :\left[ n\right]
\rightarrow \left\{ 0,1\right\} $, let $C\left( \varepsilon \right) $ be a
variable-free Boolean circuit obtained by substituting $\varepsilon \left(
i\right) \in \left\{ 0,1\right\} $ for every input $v_{i}$ from $C$. Let $%
\left\| C\left( \varepsilon \right) \right\| \in \left\{ 0,1\right\} $
denote Boolean value of $C\left( \varepsilon \right) $. Then clearly 
\begin{equation*}
\forall \varepsilon \left( \left\| C\left( \varepsilon \right) \right\|
=\left\| C^{\pm }\left( \varepsilon \right) \right\| \right) \qquad \qquad
\qquad \qquad \qquad \,\left( 2\right)
\end{equation*}
Moreover, assuming that CLIQUE$_{m,k}$ is decidable by $C$ we suppose that 
\begin{equation*}
\forall \varepsilon \left( \left\| C\left( \varepsilon \right) \right\|
=1\Leftrightarrow CLIQ\left( \varepsilon \right) \right) \qquad \qquad
\qquad \quad \quad \ \left( 3\right)
\end{equation*}
Furthermore, we define a DNF expansion $\mathrm{DN}\left( C^{\pm }\right)
\subset \mathcal{D}$ such that 
\begin{equation*}
\forall \varepsilon \left( 
\begin{array}{c}
\left\| C^{\pm }\left( \varepsilon \right) \right\| =1\Leftrightarrow \\ 
\left( \exists D\in \mathrm{DN}\left( C^{\pm }\right) \right) \left\|
C_{D}^{\pm }\left( \varepsilon \right) \right\| =1\smallskip
\end{array}
\right) \quad \qquad \ \ \ \left( 4\right)
\end{equation*}

Having done this we show that the hypothetical equivalence 
\begin{equation*}
\forall \varepsilon \left( 
\begin{array}{c}
CLIQ\left( \varepsilon \right) \Leftrightarrow \\ 
\left( \exists D\in \mathrm{DN}\left( C^{\pm }\right) \right) \left\|
C_{D}^{\pm }\left( \varepsilon \right) \right\| =1\smallskip
\end{array}
\right) \quad \qquad \ \ \ \left( 5\right)
\end{equation*}
implies that the size of $C^{\pm }$, and hence that of $C$, is exponential
in $m$, which eventually implies $\mathbf{P}\neq \mathbf{NP}$ (in fact, $%
\mathbf{NP}\subseteq \mathbf{P/poly}$). It remains to observe that $\left(
5\right) $ follows from conjunction of $\left( 1\right) $, $\left( 2\right) $%
, $\left( 3\right) $, $\left( 4\right) $.

That (5) \textbf{implies exponential complexity of} $C^{\pm }$ is our
crucial technical refinement of the known ``monotone'' approach. To prove
this implication we adapt Razborov-style method of approximations and
estimate resulting deviations, a.k.a. ``error sets'', via modified Erd\H{o}%
s-Rado lemma and double sunflowers with quasi-positive cores $\left\langle
G,\emptyset \right\rangle $. The sets of double graphs $\mathrm{DN}\left(
C^{\pm }\right) $ occurring in (5) are double-graph analogues of plain \emph{%
crude circuits} considered in the ``monotone'' proofs.

The whole proof is exposed in Chapters 1, 2 below. For obvious reasons it is
more technical than standard presentations of the monotone approach, which
omit logic formalism of the notions involved.

\subparagraph{Acknowledgment}

I would like to thank Ren\'{e} Thiemann who took the time to verify crucial
proofs with the theorem prover Isabelle \cite{R}, whose work was extremely
helpful in finding flaws and errors in the earlier presentations.

\section{\protect\large Preliminaries}

\subsection{\protect\large Basic notations}

\begin{itemize}
\item  In the sequel we assume

\fbox{$2<\ell <p<k\leq m^{\frac{1}{4}}\ $and$\ L=\left( p-1\right) ^{\ell
}\ell !$}

\item  For any $A,B\subseteq \left[ m\right] $ let

\fbox{$A\ast B:=\left\{ \left\{ x,y\right\} :x\in A\ \&\ y\in B\ \&\ x\neq
y\right\} $} \newline
and \fbox{$A^{\left( 2\right) }:=A\ast A$}, where $\left[ m\right] :=\left\{
1,\cdots ,m\right\} $.

So $\left| \left[ m\right] ^{\left( 2\right) }\right| =\binom{m}{2}=\frac{1}{%
2}m\left( m-1\right) $, where $\left| S\right| :=card\left( S\right) $.

\item  For any $X\subseteq \left[ m\right] ^{\left( 2\right) }$ let

\fbox{\textsc{v}$\left( X\right) :=\left\{ x\in \left[ m\right] :\left(
\exists y\in \left[ m\right] \right) \left\{ x,y\right\} \in X\right\} $}

and \fbox{$\wp _{0}X:=\left\{ Y:Y\subset X\right\} $} (the proper subsets of 
$X$).

\item  Let \fbox{$\mathcal{F}:=\left\{ f:\left[ m\right] \rightarrow \left[
k-1\right] \right\} $}\ and for any $f\in \mathcal{F}$

\fbox{$C_{f}:=\left\{ \left\{ x,y\right\} \in \left[ \mathrm{Dom}\left(
f\right) \right] ^{\left( 2\right) }:f\left( x\right) \neq f\left( y\right)
\right\} $}
\end{itemize}

\subsection{\protect\large Plain and double graphs}

\begin{itemize}
\item  Call $\mathcal{G}:=\wp _{0}\left[ m\right] ^{\left( 2\right) }$ the
set of graphs (unordered, possibly empty) on (at most) $m$ vertices.
For any $\emptyset \neq G\in \mathcal{G}$ call pairs $\left\{ x,y\right\}
\in \mathcal{G}$ and v$\left( G\right) $ the edges and vertices,
respectively.

\item  \fbox{$\mathrm{POS}:=\mathcal{K\!}:=\!\left\{ \text{\textsc{%
v\negthinspace }}\left( G\right) ^{\left( 2\right) }:\left| \text{\textsc{v}}%
\left( G\right) \right| =k\right\} $} and

$\fbox{$\mathrm{CLIQ}:=\left\{ G\in \mathcal{G}:\left( \exists K\in \mathcal{%
K}\right) K\subseteq G\right\} $}$

are called positive tests and plain clique problem, respectively.

\item  \fbox{$\mathrm{NEG}:=\left\{ C_{f}:f\in \mathcal{F}\right\} $} and

$\fbox{$\mathrm{ACLIQ}:=\left\{ G\in \mathcal{G}:\left( \exists H\in \mathrm{%
NEG}\right) G\subseteq H\right\} $}$

are called\smallskip\ negative tests and plain anticliques, respectively$.$

\item  Pairs of disjoint plain graphs are called double graphs. That is,

\fbox{$\mathcal{D}:=\left\{ 
\begin{array}{c}
\left\langle G,H\right\rangle \in \mathcal{G\times G}: \\ 
G\cap H=\emptyset \ \&\ G\cup H\in \wp _{0}\left[ m\right] ^{\left( 2\right)
}
\end{array}
\right\} $}\smallskip\ is the set of double graphs. Double graph $%
\left\langle \emptyset ,\emptyset \right\rangle $ is identified with $%
\emptyset $, while $\mathcal{G}$ regarded part of $\mathcal{D}$ via $G\ni 
\mathcal{G}\hookrightarrow \left\langle G,\emptyset \right\rangle \in 
\mathcal{D}$

and/or $G\ni \mathcal{G}\hookrightarrow \left\langle \emptyset
,G\right\rangle \in \mathcal{D}$.

\item  For any $D=\left\langle G,H\right\rangle \in \mathcal{D}$ and $%
\mathcal{X}\subseteq \mathcal{D}$ let

\fbox{$D^{+}\!:=\!G$, $D^{-}\!:=\!H$}$\,\in \mathcal{G}$ and

\fbox{$\mathcal{X}^{+}:=\left\{ D^{+}:D\in \mathcal{X}\right\} $}$%
\,\subseteq \mathcal{G}$ and

\fbox{$\mathcal{X}^{-}:=\left\{ D^{-}:D\in \mathcal{X}\right\} $}$%
\,\subseteq \mathcal{G}$.

\item  For any $G\in \mathcal{G}$ and $D\in \mathcal{D}$ let

\fbox{$G\subseteq ^{+}D:\Leftrightarrow G\subseteq \!D^{+}\!$}.

\item  $\fbox{$\mathrm{CLIQ}_{2}:=\left\{ D\in \mathcal{D}:\left( \exists
K\in \mathrm{POS}\right) K\subset ^{+}D\right\} $}$ are called double
cliques.

\item  $\fbox{$\mathrm{ACLIQ}_{2}:=\left\{ D\in \mathcal{D}:\left( \exists
G\in \mathrm{NEG}\right) G\subseteq ^{+}D\right\} $}$ are called double
anticliques.
\end{itemize}

\begin{lemma}
$\mathrm{CLIQ}_{2}\!\cap \!\mathrm{ACLIQ}$\thinspace $_{2}=\emptyset $.
\negthinspace Moreover $\left| \mathrm{POS}\right| $ $\!=\!\left( \QATOP{m}{k%
}\right) $ and $\left| \mathcal{F}\right| \!=\!\left( k-1\right)
^{m}\!>\left| \mathrm{NEG}\right| .$
\end{lemma}

\begin{proof}
This is easily verified (precise estimation of $\left| \mathrm{NEG}\right| $
is unimportant).
\end{proof}

\subsection{\protect\large Basic operations on double graphs}

Except for standard set-theoretic operations $\cup $ and $\cap $\ we
consider double union $\Cup :\mathcal{D}\times \mathcal{D}\rightarrow 
\mathcal{D}$ and double product $\odot :\wp \mathcal{D}\times \wp \mathcal{D}%
\rightarrow \wp \mathcal{D}$.

\begin{itemize}
\item  For any $D,E\in \mathcal{D}$ let

$D\Cup E:=\!\left\{ \!\! 
\begin{array}{ccc}
\left\langle D^{+}\!\cup \!E^{+}\!,D^{-}\!\cup \!E^{-}\right\rangle ,\! & 
\!if\ it & \!\!\!\!\in \mathcal{D}, \\ 
\emptyset , & \!else. & 
\end{array}
\!\right. $
\end{itemize}

\begin{itemize}
\item  For any $\mathcal{X},\mathcal{Y}\subseteq \mathcal{D}$ let

\fbox{$\mathcal{X\odot Y}:=\!\left\{ D\Cup E:\left\langle D,E\right\rangle
\in \mathcal{X}\times \mathcal{Y}\right\} $}$\,\subseteq \!\mathcal{D}.$
\end{itemize}

Note that $\emptyset \odot \mathcal{Y}=\mathcal{X}\odot \emptyset =\emptyset 
$. The following conditions easily hold for any $\mathcal{X,Y,X^{\prime
},Y^{\prime }}\subseteq \mathcal{D}.$

\begin{enumerate}
\item  $\mathcal{X}\odot \mathcal{Y}=\mathcal{Y}\odot \mathcal{X},\ \mathcal{%
X}\odot \left( \mathcal{Y\odot Z}\right) =\left( \mathcal{X\odot Y}\right)
\odot \mathcal{Z}.$

\item  $\mathcal{X}\odot \left( \mathcal{Y\cup Z}\right) =\left( \mathcal{%
X\odot Y}\right) \cup \left( \mathcal{X\odot Z}\right) ,\ \mathcal{X}\cup
\left( \mathcal{Y\odot Z}\right) \subseteq \left( \mathcal{X\cup Y}\right)
\odot \left( \mathcal{X\cup Z}\right) .$

\item  $\mathcal{X}\subseteq\mathcal{X}^{\prime }\&\,\mathcal{Y}\subseteq 
\mathcal{Y}^{\prime }\Rightarrow \mathcal{X}\odot\mathcal{Y}\subseteq 
\mathcal{X}^{\prime }\odot \mathcal{Y}^{\prime }.$
\end{enumerate}

\section{\protect\large Proof proper}

\subsection{\protect\large Acceptability}

With any given set of double graphs $\mathcal{X}$ we correlate accepted
double tests $\mathrm{AC}\left( \mathcal{X}\right) \subseteq D,$ $\mathrm{AC}%
^{\text{\textsc{p}}}\left( \mathcal{X}\right) \subseteq \mathrm{POS}$ and
negative double coloring $\mathrm{AC}^{\text{\textsc{n}}}\left( \mathcal{X}%
\right) \subseteq \mathrm{NEG}$. $\smallskip $Corresponding sets of accepted
double tests, resp. colorings, are as follows.

\begin{definition}
For any $\mathcal{X}\subseteq \mathcal{D}$ and $D\in \mathcal{D}$ let $%
\mathcal{X\vdash }D$ abbreviate $\left( \exists E\in \mathcal{X}\right)
E^{+}\subseteq D\smallskip ^{+}$. Then let:

\begin{enumerate}
\item  $\mathrm{AC}\left( \mathcal{X}\right) :=\left\{ D\in \mathcal{D}:%
\mathcal{X\vdash }D\right\} $.

\item  $\mathrm{AC}^{\text{\textsc{p}}}\left( \mathcal{X}\right) :=\mathrm{AC%
}\left( \mathcal{X}\right) ^{+}\cap \mathrm{POS}$,

\item  $\mathrm{AC}^{\text{\textsc{n}}}\left( \mathcal{X}\right) :=\mathrm{AC%
}\left( \mathcal{X}\right) ^{+}\cap \mathrm{NEG}$.
\end{enumerate}
\end{definition}

\begin{lemma}
Conditions 1--6 hold for any $\mathcal{X},\mathcal{Y}\subseteq \mathcal{D}$.

\begin{enumerate}
\item  $\mathrm{AC}\left( \mathcal{\emptyset }\right) =\mathrm{AC}^{\text{%
\textsc{p}}}\left( \mathcal{\emptyset }\right) =\mathrm{AC}^{\text{\textsc{n}%
}}\left( \mathcal{\emptyset }\right) =\emptyset $.

\item  $\mathrm{AC}\left( \mathcal{D}\right) =\mathcal{D},\ \mathrm{AC}^{%
\text{\textsc{p}}}\left( \mathrm{POS}\right) =\mathrm{POS},$

\item  $\mathrm{AC}^{\text{\textsc{n}}}\left( \mathrm{NEG}\right) =\mathrm{%
NEG}.$

\item  If $X\subseteq Y$ then $\mathrm{AC}^{\left( \text{\textsc{p}}\right)
\left( \text{\textsc{n}}\right) }\left( \mathcal{X}\right) \subseteq \mathrm{%
AC}^{\left( \text{\textsc{p}}\right) \left( \text{\textsc{n}}\right) }\left( 
\mathcal{Y}\right) $.

\item 
\begin{enumerate}
\item  $\mathrm{AC}\left( \mathcal{X\cup Y}\right) =\mathrm{AC}\left( 
\mathcal{X}\right) \cup \mathrm{AC}\left( \mathcal{Y}\right) $,

\item  $\mathrm{AC}^{\text{\textsc{p}}}\left( \mathcal{X\cup Y}\right) =%
\mathrm{AC}^{\text{\textsc{p}}}\left( \mathcal{X}\right) \cup \mathrm{AC}^{%
\text{\textsc{p}}}\left( \mathcal{Y}\right) $,

\item  $\mathrm{AC}^{\text{\textsc{n}}}\left( \mathcal{X\cup Y}\right) =%
\mathrm{AC}^{\text{\textsc{n}}}\left( \mathcal{X}\right) \cup \mathrm{AC}^{%
\text{\textsc{n}}}\left( \mathcal{Y}\right) $.
\end{enumerate}

\item 
\begin{enumerate}
\item  $\mathrm{AC}\left( \mathcal{X\cap Y}\right) \!\subseteq \!\mathrm{AC}%
\left( \mathcal{X\odot Y}\right) \!=\newline
\mathrm{AC}\left( \mathcal{X}\right) \cap \mathrm{AC}\left( \mathcal{Y}%
\right) $,

\item  $\mathrm{AC}^{\text{\textsc{p}}}\left( \mathcal{X\cap Y}\right)
\!\subseteq \!\mathrm{AC}^{\text{\textsc{p}}}\left( \mathcal{X\odot Y}%
\right) \!=\newline
\mathrm{AC}^{\text{\textsc{p}}}\left( \mathcal{X}\right) \cap \mathrm{AC}^{%
\text{\textsc{p}}}\left( \mathcal{Y}\right) $,

\item  $\mathrm{AC}^{\text{\textsc{n}}}\left( \mathcal{X\cap Y}\right)
\!\subseteq \!\mathrm{AC}^{\text{\textsc{n}}}\left( \mathcal{X\odot Y}%
\right) \!=\newline
\mathrm{AC}^{\text{\textsc{n}}}\left( \mathcal{X}\right) \cap \mathrm{AC}^{%
\text{\textsc{n}}}\left( \mathcal{Y}\right) $.
\end{enumerate}
\end{enumerate}
\end{lemma}

\begin{proof}
1--5: trivial.

6 (a).\ It will suffice to prove $\mathrm{AC}\left( \mathcal{X\odot Y}%
\right) =\mathrm{AC}\left( \mathcal{X}\right) \cap \mathrm{AC}\left( 
\mathcal{Y}\right) $. So suppose $D\in \mathrm{AC}\left( \mathcal{X\odot Y}%
\right) $, i.e. $\mathcal{X\odot Y\vdash }D$, i.e.$\ $there are $E_{1}\in 
\mathcal{X}$ and $E_{2}\in \mathcal{Y}$ such that $E_{1}^{+}\cup
E_{2}^{+}\subseteq D^{+}$, which by 
\begin{equation*}
E_{1}^{+}\cup E_{2}^{+}\subseteq D^{+}\Leftrightarrow E_{1}^{+}\subseteq
D^{+}\,\&\,\,E_{2}^{+}\subseteq D^{+}
\end{equation*}
yields both $D\in \mathrm{AC}\left( \mathcal{X}\right) $ and $D\in \mathrm{AC%
}\left( \mathcal{Y}\right) $. Suppose $D\in \mathrm{AC}\left( \mathcal{X}%
\right) \cap \mathrm{AC}\left( \mathcal{Y}\right) $, i.e. $\mathcal{X\vdash }%
D$ and $\mathcal{Y\vdash }D$, i.e. there are $E_{1}\in \mathcal{X}$ and $%
E_{2}\in \mathcal{Y}$ such that $E_{1}^{+}\subseteq D^{+}$ and $%
E_{2}^{+}\subseteq D^{+}$, and hence $E_{1}\Cup E_{2}\in \mathcal{X\odot Y} $%
, which by the same token yields $D\in \mathrm{AC}\left( \mathcal{X\odot Y}%
\right) $.

6 (b), (c) follow analogously.
\end{proof}

\subsection{\protect\large Approximations and deviations}

In what follows we generalize conventional monotone approach, cf. e.g. \cite
{AA}, \cite{AR1}, \cite{AR2}, \cite{AR3}, \cite{P}, \cite{S}, \cite{J}, \cite
{Y}, \cite{R}, \cite{ER}. We supply operations\ $\cup $ and $\odot $ on $\wp 
\mathcal{D}$ with their approximators $\sqcup $ and $\sqcap $ operating on
arbitrary subsets $\mathcal{X\subseteq D}$ such that for all $D$ from $%
\mathcal{X}$, $\left| \text{\textsc{v}}\left( D^{+}\right) \right| \leq \ell 
$ (thus we approximate only positive parts of double graphs). We define
corresponding deviations $\partial _{\sqcup }^{\text{\textsc{p}}},\partial
_{\sqcup }^{\text{\textsc{n}}},$ $\partial _{\sqcap }^{\text{\textsc{p}}%
},\partial _{\sqcap }^{\text{\textsc{n}}}$\ from $\cup $ and $\odot $ with
respect to accepted test graphs and show that these deviations make
``small'' fractions thereof (Lemmata 10, 11). These deviations are analogous
to ``error sets'' caused by approximations in conventional monotone approach
based on the Erd\H{o}s-Rado lemma \cite{ER} (cf. \cite{P}, \cite{J}, \cite{Y}%
).

\subsubsection{\protect\large Basic notations and definitions}

\begin{itemize}
\item  Let $\mathcal{G}^{\ell }:=\left\{ G\in \mathcal{G}:\left| \text{%
\textsc{v}}\left( G\right) \right| \leq \ell \right\} $ and

$\mathcal{D}^{\ell }:=\left\{ D\in \mathcal{D}:\left| \text{\textsc{v}}%
\left( D^{+}\right) \right| \leq \ell \right\} $. \footnote{{\small {%
{\footnotesize Note that $G\in G^{\ell }$ implies}}}
\par
{\small $\sqrt{2\left| G\right| }<\frac{1}{2}\left( 1+\sqrt{1+8\left|
G\right| }\right) \leq \left| \text{\textsc{v}}\left( G\right) \right| \leq
2\ell .$}}

Let $\wp _{L}\mathcal{D}:=\left\{ \mathcal{X\subseteq D}:\left| \mathcal{X}%
^{+}\right| \leq L\right\} $ and

$\wp _{L}\mathcal{D}^{\ell }$ $:=$ $\left\{ \mathcal{X\subseteq D}^{\ell
}:\left| \mathcal{X}^{+}\right| \leq L\right\} $.

\item  If $D,E\in \mathcal{D}^{\ell }$ and $\mathcal{X,Y}\subseteq \mathcal{D%
}^{\ell }$, let

$D\,\Cup ^{\ell }E:=\left\{ 
\begin{array}{ccc}
D\Cup E, & if & it\in \mathcal{D}^{\ell }, \\ 
\emptyset & else, & 
\end{array}
\right. $

\item  $\mathcal{X}\odot ^{\ell }\mathcal{Y}:=\left\{ D\Cup ^{\ell }\!E\in 
\mathcal{D}^{\ell }:D\in \mathcal{X}\ \&\ E\in \mathcal{Y}\right\} \in \wp \,%
\mathcal{D}^{\ell }$.

\item  Together with double graphs consider double sets

$\mathcal{S}=\left\{ \left\langle A,B\right\rangle :A,B\subseteq \left[ m%
\right] \ \&\ A\cap B=\emptyset \right\} $, where for $S=\left\langle
A,B\right\rangle \in \mathcal{S}$ we denote $A$ and $B$ by $S^{+}$ and $%
S^{-} $, respectively.

Let $\mathcal{S}^{\ell }:=\!\left\{ S\!\in \!\mathcal{S}:\left| S^{+}\right|
\leq \ell \right\} $,

$\wp _{L}\mathcal{S}:=\left\{ \mathcal{X\subseteq S}:\left| \mathcal{X}%
^{+}\right| \leq L\right\} $ and

$\wp _{L}\mathcal{S}^{\ell }$ := $\left\{ \mathcal{X\subseteq S}^{\ell
}:\left| \mathcal{X}^{+}\right| \leq L\right\} $, where

$\mathcal{X}^{+}=\left\{ S^{+}:S\in \mathcal{X}\right\} .$

\item  For any $G,H\in G$ and $D\in D$ we let

\textsc{v}$\left( G\setminus \!\!\!\setminus H\right) :=\QTR{sc}{v}\left(
G\right) \setminus \QTR{sc}{v}\left( H\right) $ and

\textsc{s}$\left( D\right) :=\,\left\langle \text{\textsc{v}}\left(
D^{+}\setminus \!\!\!\setminus D^{-}\right) ,\text{\textsc{v}}\left(
D^{-}\setminus \!\!\!\setminus D^{+}\right) \right\rangle \in \mathcal{S}$.

For any $\mathcal{X}\subseteq \mathcal{\!D}$ let \textsc{s}$\left( \mathcal{X%
}\right) :=\left\{ \text{\textsc{s}}\left( D\right) :D\in \mathcal{X}%
\right\} \!\subseteq \!\mathcal{S}$.

Then for any $X\mathcal{\!}\subseteq \!\mathcal{D}$, $\mathcal{Y}\subseteq 
\mathcal{D}^{\ell }$ and $\mathcal{Z}\in \wp _{L}\mathcal{D}^{\ell }$

we get \textsc{s}$\left( \mathcal{X}\right) \subseteq \mathcal{S}$, \textsc{s%
}$\left( \mathcal{Y}\right) \subseteq \mathcal{S}^{\ell }$ and \textsc{s}$%
\left( \mathcal{Z}\right) \in \wp _{L}\mathcal{S}^{\ell }$.
\end{itemize}

\begin{definition}
A collection of double sets $\mathcal{V}=\left\{ V_{1},\cdots ,V_{p}\right\}
\subset $ $\mathcal{S}$ is called a sunflower with $p$ (different) positive
petals $V_{1}^{+},\cdots ,V_{p}^{+}$ if $V_{1}^{+}\cap
V_{2}^{+}=V_{i}^{+}\cap V_{j}^{+}$\ holds for all $i<j\in \left[ p\right] $.
Then $V_{\copyright }:=\!\left\langle V_{1}^{+}\cap V_{2}^{+},\emptyset
\right\rangle $ is called the core of $\mathcal{V}$.
\end{definition}

\begin{lemma}
Any given $\mathcal{U}\subseteq \mathcal{S}^{\ell }$ such that $\left| \, 
\mathcal{U}^{+}\right| >L$ contains a sunflower $\mathcal{V}\subset \mathcal{%
U}$ with $p$ positive petals $V_{1}^{+},\cdots ,V_{p}^{+}$ and core $%
V_{\copyright }\in \mathcal{S}^{\ell }$.
\end{lemma}

\begin{proof}
By the original Erd\H{o}s-Rado lemma \cite{ER}, since $L=\left( p-1\right)
^{\ell }\ell !$.
\end{proof}

\begin{definition}[plucking]
Plucking algorithm 
\begin{equation*}
\wp \mathcal{D}^{\ell }\ni \mathcal{Z}\mapsto \mathrm{PL}\left( \mathcal{Z}%
\right) \in \wp _{L}\mathcal{D}^{\ell }
\end{equation*}
arises by recursion on $\left| \text{\textsc{s}}\left( \mathcal{Z}\right)
\right| $. If $\left| \text{\textsc{s}}\left( \mathcal{Z}\right) ^{+}\right|
\leq L$, let $\mathrm{PL}\left( \mathcal{Z}\right) :=\mathcal{Z}$.
Otherwise, let $\mathcal{Z}_{0}:=\mathcal{Z}$, thus $\left| \text{\textsc{s}}%
\left( \mathcal{Z}_{0}\right) ^{+}\right| >L$. By the last lemma with
respect to $U:=\text{\textsc{s}}\left( \mathcal{Z}_{0}\right) \subseteq 
\emph{S}^{\ell }$ we choose a sunflower of cardinality $p$, $\mathcal{V}%
=\left\{ V_{1},\cdots ,V_{p}\right\} \subseteq \,$\textsc{s}$\left( \mathcal{%
Z}_{0}\right) $ with positive petals $V_{1}^{+},\cdots ,V_{p}^{+}$ and core $%
V_{\copyright }=\!\left\langle V_{1}^{+}\cap V_{2}^{+},\emptyset
\right\rangle \in \mathcal{S}^{\ell }$. Let $\mathcal{Z}_{0}^{\prime
}:=\left\{ D\in \mathcal{Z}_{0}:\left( \exists j\in \left[ p\right] \right) 
\text{\textsc{s}}\left( D\right) =V_{j}\right\} \subseteq \mathcal{D}^{\ell
} $ and $D_{\copyright }=\left\langle D_{\copyright }^{+},\emptyset
\right\rangle \in \mathcal{D}^{\ell }$, where $D_{\copyright }^{+}:=\bigcap
\left\{ D^{+}\!:D\in \mathcal{Z}_{0}^{\prime }\right\} $, which also yields 
\textsc{s}$\left( D_{\copyright }\right) ^{+}\subseteq V_{\copyright }^{+}$.
Then rewrite $\mathcal{Z}_{0}$ to $\mathcal{Z}_{1}$ that arises by replacing
every $D\in \mathcal{Z}_{0}^{\prime }$ by $D_{\copyright }$. \footnote{%
{\small {\footnotesize This operation will be referred to as elementary
plucking.}}} Note that $\left| \text{\textsc{s}}\left( \mathcal{Z}%
_{1}\right) ^{+}\right| \leq \left| \text{\textsc{s}}\left( \mathcal{Z}%
_{0}\right) ^{+}\right| -p+1.$ If $\left| \text{\textsc{s}}\left( \mathcal{Z}%
_{1}\right) ^{+}\right| \leq L$, let $\mathrm{PL}\left( \mathcal{Z}\right) :=%
\mathcal{Z}_{1}$. Otherwise, if $\left| \text{\textsc{s}}\left( \mathcal{Z}%
_{1}\right) ^{+}\right| >L$, we analogously pass from $\mathcal{Z}%
_{1}\subseteq D^{\ell }$ to $\mathcal{Z}_{2}\subseteq D^{\ell }$. Proceeding
this way we eventually arrive at $\mathcal{Z}_{q}\subseteq \mathcal{D}^{\ell
}$ with $\left| \text{\textsc{s}}\left( \mathcal{Z}_{q}\right) ^{+}\right|
\leq L$ and then let $\mathrm{PL}\left( \mathcal{Z}\right) :=Z_{q}$.
\end{definition}

\begin{lemma}
For any given $\mathcal{Z}\in \wp \mathcal{D}^{\ell }$, $\mathrm{PL}\left( 
\mathcal{Z}\right) \in \wp _{L}\mathcal{D}^{\ell }$ requires $<\left| \text{%
\textsc{s}}\left( \mathcal{Z}\right) ^{+}\right| $ elementary pluckings.
That is, if $\mathrm{PL}\left( \mathcal{Z}\right) :=Z_{q}$ as above, then $%
q<\left| \text{\textsc{s}}\left( \mathcal{Z}\right) ^{+}\right| $.
\end{lemma}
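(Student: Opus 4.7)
The plan is to exploit the fact that an elementary plucking on side $\flat \in \{+,-\}$ touches only the $\flat$-component of $\mathcal{X}_i$, leaving the $\natural$-component untouched, and that it strictly reduces the number of distinct vertex sets on side $\flat$ by at least $p-1$. Once this monovariant is established, the bound follows by adding the contributions from the two sides.

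More concretely, I would first observe directly from the definition: when the elementary plucking is triggered by a sunflower $\mathcal{S}^{\flat}=\{\text{\textsc{v}}(G_{1}),\ldots,\text{\textsc{v}}(G_{p})\}\subseteq\text{\textsc{v}}(\mathcal{X}_{i}^{\flat})$, every $D\in\mathcal{U}$ is replaced by $D_{\circledS}$ with $D_{\circledS}^{\flat}=G_{\circledS}$ and $D_{\circledS}^{\natural}=D^{\natural}$. Hence the $p$ distinct vertex sets $\text{\textsc{v}}(G_{1}),\ldots,\text{\textsc{v}}(G_{p})$ collapse to the single vertex set $V_{\circledS}$ (which may or may not already appear in $\text{\textsc{v}}(\mathcal{X}_{i}^{\flat})$), while no new vertex sets are created and $\text{\textsc{v}}(\mathcal{X}_{i}^{\natural})$ does not change at all. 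This gives the key inequalities
\[
|\text{\textsc{v}}(\mathcal{X}_{i+1}^{\flat})|\leq |\text{\textsc{v}}(\mathcal{X}_{i}^{\flat})|-(p-1),\qquad |\text{\textsc{v}}(\mathcal{X}_{i+1}^{\natural})|=|\text{\textsc{v}}(\mathcal{X}_{i}^{\natural})|,
\]
which are exactly the bookkeeping identities already noted in the definition of the plucking algorithm.

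Now let $r^{+}$, resp.\ $r^{-}$, denote the number of elementary pluckings performed on the $+$-side, resp.\ $-$-side, during the run that produces $\mathrm{PL}(\mathcal{X})=\mathcal{X}_{r}$, so that $r=r^{+}+r^{-}$. Iterating the inequality above over those steps where $\flat=+$ (while noting that the $+$-side count is untouched during $-$-side pluckings), I obtain
\[
0\leq |\text{\textsc{v}}(\mathcal{X}_{r}^{+})|\leq |\text{\textsc{v}}(\mathcal{X}_{0}^{+})|-r^{+}(p-1),
\]
whence $r^{+}\leq |\text{\textsc{v}}(\mathcal{X}_{0}^{+})|/(p-1)\leq \|\text{\textsc{v}}(\mathcal{X})\|/(p-1)$. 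The symmetric argument gives $r^{-}\leq \|\text{\textsc{v}}(\mathcal{X})\|/(p-1)$, and summing yields $r\leq 2\|\text{\textsc{v}}(\mathcal{X})\|(p-1)^{-1}$.

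There is no real obstacle: the substantive content (the Erd\H{o}s--Rado extraction of sunflowers of size $p$, and the fact that the core survives as a single vertex set) is already baked into the definition of elementary plucking, so what remains is essentially a bookkeeping argument. The only point that deserves care is the clean separation of $+$-pluckings from $-$-pluckings — that each step affects exactly one side — since without it one could not add the two counts without risk of double-counting.
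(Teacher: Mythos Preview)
Your argument is correct and follows the same idea as the paper: each elementary plucking reduces the count $|\text{\textsc{v}}(\mathcal{X}_i^{\flat})|$ on exactly one side by at least $p-1$ while leaving the other side unchanged, so summing over the two sides gives $r\le 2\|\text{\textsc{v}}(\mathcal{X})\|(p-1)^{-1}$. Your explicit decomposition $r=r^{+}+r^{-}$ and your use of $\le$ rather than $=$ (allowing for the possibility that $V_{\circledS}$ already occurs in $\text{\textsc{v}}(\mathcal{X}_i^{\flat})$) are slightly more careful than the paper's terse one-line version, but the substance is identical.
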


\begin{proof}
Each elementary plucking reduces the number of sets at least by $p-1$. Hence

$q<\left| \text{\textsc{s}}\left( \mathcal{Z}\right) ^{+}\right| \left(
p-1\right) ^{-1}$ $<\left| \text{\textsc{s}}\left( \mathcal{Z}\right)
^{+}\right| $.
\end{proof}

\begin{definition}
For any $\mathcal{X},\mathcal{Y}\in \wp \mathcal{D}^{\ell }$ call the
following operations $\sqcup $, $\sqcap $ and sets $\mathcal{X}\sqcup 
\mathcal{Y}$, $\mathcal{X}\sqcap \mathcal{Y}$ the approximators and
approximations of operations $\cup $, $\odot $ and sets $\mathcal{X}\cup 
\mathcal{Y}$, $\mathcal{X}\odot \mathcal{Y}$, respectively, which determine\
deviations $\partial _{\sqcup }^{\text{\textsc{p}}},\partial _{\sqcup }^{%
\text{\textsc{n}}},\partial _{\sqcap }^{\text{\textsc{p}}},\partial _{\sqcap
}^{\text{\textsc{n}}}$ with respect to the accepted tests. \footnote{{\small 
{\footnotesize We write $\partial $ instead of $\delta $ used in \cite{AR1}}}%
$-${\small {\footnotesize \cite{AR3}.}}}

\begin{enumerate}
\item  $\mathcal{X}\sqcup \mathcal{Y}:=\mathrm{PL}\left( \mathcal{X\cup Y}%
\right) \in \wp _{L}\mathcal{D}^{\ell }$.

\item  $\mathcal{X}\sqcap \mathcal{Y}:=\mathrm{PL}\left( \mathcal{X\odot
^{\ell }Y}\right) \in \wp _{L}\mathcal{D}^{\ell }$.

\item  $\partial _{\sqcup }^{\text{\textsc{p}}}\left( \mathcal{X,Y}\right) :=%
\mathrm{AC}^{\text{\textsc{p}}}\!\left( \mathcal{X\cup Y}\right) \setminus 
\mathrm{AC}^{\text{\textsc{p}}}\!\left( \mathcal{X\sqcup Y}\right) \subseteq 
\mathrm{POS}.$

\item  $\partial _{\sqcap }^{\text{\textsc{p}}}\left( \mathcal{X,Y}\right) :=%
\mathrm{AC}^{\text{\textsc{p}}}\!\left( \mathcal{X\odot Y}\right) \setminus 
\mathrm{AC}^{\text{\textsc{p}}}\!\left( \mathcal{X\sqcap Y}\right) \subseteq 
\mathrm{POS}.$

\item  $\partial _{\sqcup }^{\text{\textsc{n}}}\left( \mathcal{X,Y}\right) =%
\mathrm{AC}^{\text{\textsc{n}}}\!\left( \mathcal{X\sqcup Y}\right) \setminus 
\mathrm{AC}^{\text{\textsc{n}}}\!\left( \mathcal{X\cup Y}\right) \subseteq 
\mathrm{NEG}.$

\item  $\partial _{\sqcap }^{\text{\textsc{n}}}\left( \mathcal{X,Y}\right) :=%
\mathrm{AC}^{\text{\textsc{n}}}\!\left( \mathcal{X\sqcap Y}\right) \setminus 
\mathrm{AC}^{\text{\textsc{n}}}\!\!\left( \mathcal{X\odot Y}\right)
\subseteq \mathrm{NEG}.$
\end{enumerate}

For$\mathcal{U}\subseteq \mathrm{NEG}$ let $\left| \mathcal{U}\right| ^{\ast
}:=\left| \left\{ f\in \mathcal{F}:C_{f}\subseteq \mathcal{U}\right\}
\right| $ (functional cardinality of $U$). In particular $\left| \mathrm{NEG}%
\right| ^{\ast }=\mathcal{F}$. In the sequel we use functional cardinality
as our basic measure of the number of negative double tests involved.
\end{definition}

\subsubsection{\protect\large Upper bounds}

We assume that $m$ is sufficiently large and $k=2\ell^{2}$.

\begin{lemma}
For any $D\in \mathcal{D}^{\ell }$ let

$R_{\subseteq }\left( D\right) :=\left\{ f\in \mathcal{F}:D^{+}\subseteq
C_{f}\right\} $ and

$R_{\nsubseteq }\left( D\right) \!:=\!\left\{ f\!\in \!\mathcal{F}%
\!:\!D^{+}\!\nsubseteq \!C_{f}\right\} =\mathcal{F}\setminus \!R_{\subseteq
}\!\left( D\right) $. Then $\left| \mathcal{R}_{\subseteq }\left( D\right)
\right| \!\geq \!\dfrac{1}{2}\left| \mathcal{F}\right| $ and $\left| 
\mathcal{R}_{\nsubseteq }\left( D\right) \right| \!\leq \!\dfrac{1}{2}\left| 
\mathcal{F}\right| $. Moreover, for any $D_{1},\cdots ,D_{q}\in D^{\ell }$
such that $\left( \forall i\neq j\in \left[ q\right] \right) D_{i}^{+}\cap
D_{j}^{+}=\emptyset $ it holds $\left| \overset{q}{\underset{i=1}{\bigcap }}%
\mathcal{R}_{\nsubseteq }\left( D_{i}\right) \right| \leq 2^{-q}\left| 
\mathcal{F}\right| $.
\end{lemma}

\begin{proof}
For any $G\in \mathcal{G}^{\ell }$ let $R_{\subseteq }\left( G\right)
:=\left\{ f\in \mathcal{F}:\,G\subseteq C_{f}\right\} $. This yields by
standard monotone arguments $\left| \mathcal{R}_{\subseteq }\left( G\right)
\right| \geq \dfrac{1}{2}\left| \mathcal{F}\right| $, which for any $D\in 
\mathcal{D}^{\ell }$ implies $\left| \mathcal{R}_{\subseteq }\left( D\right)
\right| \geq \dfrac{1}{2}\left| \mathcal{F}\right| $ and hence $\left| 
\mathcal{R}_{\nsubseteq }\left( D\right) \right| =\left| \mathcal{F}%
\setminus \mathcal{R}_{\subseteq }\left( D\right) \right| \leq \dfrac{1}{2}%
\left| \mathcal{F}\right| $ (see Appendix A). To establish the last
assertion it will suffice to observe that for any $j\in \left[ q-1\right] $,
it holds 
\begin{equation*}
\mathbb{P}\left[ \overset{q}{\underset{i=j}{\bigcap }}\mathcal{R}%
_{\nsubseteq }\left( D_{i}\right) \right] =\mathbb{P}\left[ \mathcal{R}%
_{\nsubseteq }\left( D_{j}\right) \right] \cdot \mathbb{P}\left[ \overset{q}{%
\underset{i=j+1}{\bigcap }}\mathcal{R}_{\nsubseteq }\left( D_{i}\right) %
\right]
\end{equation*}
where for any $\mathcal{X}\subseteq \mathcal{F}$ we set $\mathbb{P} \left[ 
\mathcal{X}\right] :=\left| \mathcal{X}\right| \left| \mathcal{F}\right|
^{-1}$ (the probability). The latter holds by standard arguments, as $%
R_{\nsubseteq }\left( D_{1}\right) ,\cdots ,R_{\nsubseteq
}\left(D_{q}\right) $ are independent events in the space $\mathcal{F}$ (see
also Appendix A).
\end{proof}

\begin{lemma}
Let $\mathcal{Z}=\mathcal{X}\cup \mathcal{Y}\in \wp \mathcal{D}^{\ell }$ and 
$\mathrm{PL}\left( \mathcal{Z}\right) \in \wp _{L}\mathcal{D}^{\ell }$ for $%
\mathcal{X,Y}\in \wp _{L}\mathcal{D}^{\ell }$. 
%Thus $\left| \text{\textsc{s}}%
%\left( \mathcal{Z}\right) ^{+}\right| \leq 2L$ and $\left| \text{\textsc{s}}%
%\left( \mathrm{PL}\left( \mathcal{Z}\right) \right) ^{+}\right| \leq L$.
Then $\mathrm{PL}\left( \mathcal{Z}\right) $ requires $<2L$ elementary
pluckings. Moreover \fbox{$\partial _{\sqcup }^{\text{\textsc{p}}}\left( 
\mathcal{X,Y}\right) =0$} while $\fbox{$\left| \partial _{\sqcup }^{\text{%
\textsc{n}}}\left( \mathcal{X,Y}\right) \right| ^{\ast }<2^{1-p}L\left| 
\mathcal{F}\right| $}$.
\end{lemma}

\begin{proof}
We argue as in the analogous monotone case using Lemmata 7, 9. Let $\mathcal{%
V}=\left\{ V_{1},\cdots ,V_{p}\right\} $ $\subseteq \,$\text{\textsc{s}}$%
\left( \left( \mathcal{X\cup Y}\right) _{i}\right) $ be the sunflower with
positive petals $V_{1},\cdots ,V_{p}$ and core $V_{\copyright
}=\!\left\langle V_{1}^{+}\cap V_{2}^{+},\emptyset \right\rangle \in 
\mathcal{S}^{\ell }$ arising at $i^{th}$ elementary plucking ($i>0$) and let 
$D_{\copyright }=\left\langle D_{\copyright }^{+},\emptyset \right\rangle
\in D^{\ell }$ be the corresponding double graph. Consider $\mathrm{PL}%
\left( \mathcal{Z}\right) $ and corresponding $\partial _{\sqcup }^{\text{%
\textsc{p}}}\left( \mathcal{X,Y}\right) $ and $\partial _{\sqcup }^{\text{%
\textsc{n}}}\left( \mathcal{X,Y}\right) $. $\partial _{\sqcup }^{\text{%
\textsc{p}}}\left( \mathcal{X,Y}\right) =\emptyset $ is clear as elementary
pluckings replace some (plain) graphs by subgraphs and thereby preserve the
accepted positive tests.\newline
Now consider $\partial _{\sqcup }^{\text{\textsc{n}}}\left( \mathcal{X,Y}%
\right) $. We estimate the total number of fake negative double tests that
arise after rewriting $\mathcal{Z}_{i-1}\hookrightarrow \mathcal{Z}_{i}$
involved. Suppose $\mathcal{Z}_{i}$ is obtained by substituting $%
D_{\copyright }=\left\langle D_{\copyright }^{+},\emptyset \right\rangle
=\left\langle \bigcap \left\{ D^{+}\!:D\in \mathcal{Z}_{0}^{\prime }\right\}
,\emptyset \right\rangle \in D^{\ell }$, for every $D\in Z_{i-1}^{\prime }$,
where $Z_{i-1}^{\prime }=\left\{ D\in \mathcal{Z}_{i-1}:\left( \exists j\in %
\left[ p\right] \right) \text{\textsc{s}}\left( D\right) =V_{j}\right\} $.
Let $\left| \mathcal{Z}_{i-1}^{\prime }\right| =p^{\prime }\geq p$ with $%
\mathcal{Z}_{i-1}^{\prime }=\left\{ D_{1},\cdots ,D_{p^{\prime }}\right\} $.
Now let $C_{f}\in \mathrm{NEG}$ be any fake negative test created by this
substitution. I.e. $D_{\copyright }^{+}\subseteq C_{f}$, although for every $%
t\in \left[ p^{\prime }\right] $, we have $D_{t}^{+}\nsubseteq C_{f}$. Let $%
D_{t}^{\prime }:=\left\langle D_{t}^{+}\setminus D_{\copyright
}^{+},D_{t}^{-}\right\rangle \in D^{\ell }$. Note that for any $s\neq t\in %
\left[ p^{\prime }\right] $ we have $D_{s}^{\prime +}\cap D_{t}^{\prime
+}=\emptyset \neq D_{t}^{+}$, while $\,$\textsc{s}$\left( D_{\copyright
}\right) $ contains the only common nodes of $D_{s}^{+}$ and $D_{t}^{+}$.
Furthermore by Lemma 9 we know that $\mathbb{P}\left[ \mathcal{R}%
_{\nsubseteq }\left( D_{t}^{\prime }\right) \right] \leq \mathbb{P}\left[ 
\mathcal{R}_{\nsubseteq }\left( D_{t}\right) \right] \leq \dfrac{1}{2}$
holds for every $t\in \left[ p^{\prime }\right] $. Summing up, by Lemma 9 we
obtain 
\begin{eqnarray*}
&&\mathbb{P}\left[ \underset{t=1}{\overset{p^{\prime }}{\bigcap }}\mathcal{R}%
_{\nsubseteq }\left( D_{t}\right) \cap \mathcal{R}_{\subseteq }\left(
D_{\copyright }\right) \right] \\
&=&\mathbb{P}\left[ \underset{t=1}{\overset{p^{\prime }}{\bigcap }}\left( 
\mathcal{R}_{\nsubseteq }\left( D_{t}\right) \cap \mathcal{R}_{\subseteq
}\left( D_{\copyright }\right) \right) \right] \\
&\leq &\mathbb{P}\left[ \underset{t=1}{\overset{p^{\prime }}{\bigcap }}%
\mathcal{R}_{\nsubseteq }\left( D_{t}^{\prime }\right) \right] \leq
2^{-p^{\prime }} \\
&\leq &2^{-p}
\end{eqnarray*}
Hence with regard to functional cardinality there are less than 
\begin{eqnarray*}
\left| \underset{t=1}{\overset{p^{\prime }}{\bigcap }}\mathcal{R}%
_{\nsubseteq }\left( D_{t}\right) \cap \mathcal{R}_{\subseteq }\left(
D_{\copyright }\right) \right| &=& \\
\mathbb{P}\left[ \underset{t=1}{\overset{p^{\prime }}{\bigcap }}\mathcal{R}%
_{\nsubseteq }\left( D_{t}\right) \cap \mathcal{R}_{\subseteq }\left(
D_{\copyright }\right) \right] \left| \mathcal{F}\right| &\leq &2^{-p}\left| 
\mathcal{F}\right|
\end{eqnarray*}
fake negative tests $C_{f}$ created by the replacement $\mathcal{Z}%
_{i-1}\hookrightarrow \mathcal{Z}_{i}$. Recall that by Lemma 7\ there are $%
q<L$ elementary pluckings involved. This yields 
\begin{eqnarray*}
\partial _{\sqcup }^{\text{\textsc{n}}}\left( \mathcal{X,Y}\right)
&\subseteq &\overset{q-1}{\underset{i=0}{\bigcup }}\partial _{\sqcup }^{%
\text{\textsc{n}}}\left( \mathcal{X,Y}\right) _{i}\text{ for } \\
\partial _{\sqcup }^{\text{\textsc{n}}}\left( \mathcal{X,Y}\right) _{i} &:&=%
\mathrm{AC}^{\text{\textsc{n}}}\!\left( \mathcal{X\cup Y}\right)
_{i+1}\setminus \mathrm{AC}^{\text{\textsc{n}}}\!\left( \mathcal{X\cup Y}%
\right) _{i}.
\end{eqnarray*}
Hence $\left| \partial _{\sqcup }^{\text{\textsc{n}}}\left( \mathcal{X,Y}%
\right) \right| ^{\ast }\leq \overset{q-1}{\underset{i=0}{\sum }}\left|
\partial _{\sqcup }^{\text{\textsc{n}}}\left( \mathcal{X,Y}\right)
_{i}\right| ^{\ast }<q2^{-p}\left| \mathcal{F}\right| <2^{1-p}L\left| 
\mathcal{F}\right| $.
\end{proof}

\begin{lemma}
Let $\mathcal{X,Y}\in \wp _{L}\mathcal{D}^{\ell }$, $\mathcal{X}\odot ^{\ell
}\mathcal{Y}\in \wp \,\mathcal{D}^{\ell }$ and $\mathcal{Z}=\mathrm{PL}%
\left( \mathcal{X\odot }^{\ell }\mathcal{Y}\right) \in \wp _{L}\mathcal{D}%
^{\ell }$. Thus $\left| \text{\textsc{s}}\left( \mathcal{Z}\right)^{+}
\right| \leq L$ and $\left| \text{\textsc{s}}\left( \mathcal{X\mathcal{\odot 
}Y}\right)^{+} \right| \leq L^{2}.$ Then $\fbox{$\left| \partial _{\sqcap }^{%
\text{\textsc{p}}}\left( \mathcal{X,Y}\right) \right| <L^{2}\left( \QATOP{%
m-\ell -1}{k-\ell -1}\right) $}$ and $\fbox{$\left| \partial _{\sqcap }^{%
\text{\textsc{n}}}\left( \mathcal{X,Y}\right) \right| ^{\ast
}<2^{-p}L^{2}\left| \mathcal{F}\right| $}$.
\end{lemma}

\begin{proof}
$\left| \partial _{\sqcap }^{\text{\textsc{n}}}\left( \mathcal{X,Y}\right)
\right| \!^{\ast }<2^{-p}L^{2}\left| \mathcal{F}\right| $ is analogous to
the inequality for $\partial _{\sqcup }^{\text{\textsc{n}}}\left( \mathcal{%
X,Y}\right) $. Consider $\partial _{\sqcap }^{\text{\textsc{p}}}\left( 
\mathcal{X,Y}\right) $. We adapt standard arguments used in the ``monotone''
proofs (cf. e.g. \cite{P}, \cite{Y}). It is readily seen that deviations can
only arise by deleting a $D\cup E\notin \mathcal{D}^{\ell }$ for some $%
D,E\in \mathcal{D}^{\ell }$ when passing from $\mathcal{X}\odot \mathcal{Y}$
to $\mathcal{X}\odot ^{\ell }\mathcal{Y}$ (note that $\mathcal{X}\odot 
\mathcal{Y}$ can completely disappear, in which case $\mathrm{PL}\left( 
\mathcal{X\odot }^{\ell }\mathcal{Y}\right) =\mathcal{X\odot }^{\ell }%
\mathcal{Y}=\emptyset $). So suppose $H\in \left( \mathcal{X\mathcal{\odot }Y%
}\right) \setminus \mathcal{D}^{\ell }$. Thus $\ell <\left| \text{\textsc{v}}%
\left( H^{+}\right) \right| \leq 2\ell $. Let us estimate $\left| \mathcal{K}%
_{H}\right| $\ for $K_{H}:=\left\{ K\in \mathrm{POS}:H^{+}\subseteq
K\right\} $. Note that $\ell <\left| \text{\textsc{v}}\left( H^{+}\right)
\right| $\ implies that $K_{H}$ contains at most $\left( \QATOP{m-\ell -1}{%
k-\ell -1}\right) $ cliques $K$. So $\left| \mathcal{K}_{H}\right| \leq
\left( \QATOP{m-\ell -1}{k-\ell -1}\right) $ and 
\begin{eqnarray*}
\partial _{\sqcap }^{\text{\textsc{p}}}\left( \mathcal{X,Y}\right)
&\subseteq &\bigcup \left\{ \mathcal{K}_{H}:H\in \mathcal{\left( \mathcal{X%
\mathcal{\odot }Y}\right) \setminus D^{\ell }}\right\} \\
&\subseteq &\bigcup \left\{ \mathcal{K}_{H}:H\in \mathcal{\mathcal{X\mathcal{%
\odot }Y}}\right\},
\end{eqnarray*}
which by $\left| \text{\textsc{s}}\left( \mathcal{X\mathcal{\odot }Y}\right)
^{+}\right| \leq L^{2}$ and Lemma 7 yields the result.
\end{proof}

\subsection{\protect\large Formalism}

We'll formalize previous considerations in basic De\thinspace Morgan logic
with atomic negation (called DMN logic) over $\left( \QATOP{m}{2}\right) $
distinct variables. For any given DMN formula $\varphi $ we define its
double graph representation $\mathrm{DN}\left( \varphi \right) $ and
approximation $\mathrm{AP}\left( \varphi \right) $ augmented with total
deviations $\partial ^{\text{\textsc{p}}}\left( \varphi \right) \subseteq 
\mathrm{POS}$ and $\partial ^{\text{\textsc{n}}}\left( \varphi \right)
\subseteq \mathrm{NEG}$. Using our estimates on $\partial _{\sqcup }^{\text{%
\textsc{p}}},\partial _{\sqcup }^{\text{\textsc{n}}},\partial _{\sqcap }^{%
\text{\textsc{p}}},\partial _{\sqcap }^{\text{\textsc{n}}}$ we show that $%
\mathrm{AC}^{\text{\textsc{p}}}\left( \mathrm{DN}\left( \varphi \right)
\right) =\mathrm{POS}$ plus $\mathrm{AC}^{\text{\textsc{n}}}\left( \mathrm{DN%
}\left( \varphi \right) \right) =\emptyset $ infers exponential circuit size
of $\varphi $ (cf. Theorem 14 below).

\subsubsection{\protect\large Syntax}

In the sequel we let $n:=\left( \QATOP{m}{2}\right) =\frac{1}{2}m\left(
m-1\right) $ and $\pi :\left[ n\right] \overset{1-1}{\longrightarrow }\left[
m\right] ^{\left( 2\right) }$.

\begin{itemize}
\item  Let $\mathcal{A}$ denote Boolean algebra with constants $0 ,1$,
operations $\vee ,\wedge $, atomic negation $\lnot $ and variables $v_{i}$
for any $i\in \left[ n\right] $. That is, formulas (abbr.: $\varphi ,\sigma
,\tau $) are built up from constants and literals $v_{i},$ $\lnot v_{i}$ ($%
i\in \left[ n\right] $) by positive operations $\vee $ and $\wedge $. For
brevity we also stipulate $1 \vee \varphi =\varphi \vee 1 :=1 $, $0 \wedge
\varphi =\varphi \wedge 0 :=0 $ and $1 \wedge \varphi =\varphi \wedge 1 =0
\vee \varphi =0 \vee \varphi =\varphi \vee 0 \!:=\!\varphi $. Let $cs\left(
\varphi \right) $ denote structural complexity (i.e.\negthinspace\
circuit\negthinspace\ size) of $\varphi $. \footnote{{\small {\footnotesize %
More precisely, $\mathrm{cs}\left( \varphi \right) $ is the total number of
pairwise distinct subterms of (including) $\varphi $. }}} \newline
De\thinspace Morgan rules for negation provide length-preserving
interpretation of full Boolean algebra.

\item  We define by recursion on $cs\left( \varphi \right) $ two assignments 
\begin{eqnarray*}
\mathcal{A} &\ni &\varphi \hookrightarrow \mathrm{DN}\left( \varphi \right)
\in \left\{ 1\right\} \cup \wp \mathcal{D}\text{\ and } \\
\mathcal{A} &\ni &\varphi \hookrightarrow \mathrm{AP}\left( \varphi \right)
\in \left\{ 1\right\} \cup \wp _{L}\mathcal{D}^{\ell }
\end{eqnarray*}
that\ represent DNFs and corresponding approximations of $\varphi $,
respectively.

\begin{enumerate}
\item  $\mathrm{DN}\left( 1\right) =\mathrm{AP}\left( 1\right) :=1,\ \mathrm{%
DN}\left( 0\right) =\newline
\mathrm{AP}\left( 0\right) :=\emptyset .$

\item  $\mathrm{DN}\left( v_{i}\right) =\mathrm{AP}\left( v_{i}\right)
:=\left\{ \left\langle \left\{ \pi \left( i\right) \right\} ,\,\emptyset
\right\rangle \right\} .$

\item  $\mathrm{DN}\left( \lnot v_{i}\right) =\mathrm{AP}\left( \lnot
v_{i}\right) :=\left\{ \left\langle \emptyset \,,\left\{ \pi \left( i\right)
\right\} \right\rangle \right\} .$

\item  $\mathrm{DN}\left( \sigma \vee \delta \right) :=\mathrm{DN}\left(
\sigma \right) \cup \mathrm{DN}\left( \delta \right) ,\newline
\mathrm{AP}\left( \sigma \vee \delta \right) :=\mathrm{AP}\left( \sigma
\right) \sqcup \mathrm{AP}\left( \delta \right) .$

\item  $\mathrm{DN}\left( \sigma \wedge \delta \right) :=\mathrm{DN}\left(
\sigma \right) \odot \mathrm{DN}\left( \tau \right) ,\newline
\mathrm{AP}\left( \sigma \wedge \delta \right) :=\mathrm{AP}\left( \sigma
\right) \sqcap \mathrm{AP}\left( \delta \right) .$
\end{enumerate}

Thus for any $\varphi =\overset{r}{\underset{i=1}{\bigvee }}\varphi _{i} $
and $\psi =\underset{i\in I}{\bigwedge }v_{i}\wedge \underset{j\in J}{%
\bigwedge }\lnot v_{j}$, where $I\cap J=\emptyset $, we have $\mathrm{DN}%
\left( \varphi \right) =$ $\overset{r}{\underset{i=1}{\bigcup }}\mathrm{DN}%
\left( \varphi _{i}\right) $ and $\mathrm{DN}\left( \psi \right) =\left\{
\left\langle G,H\right\rangle \right\} $ for $G:=\left\{ \pi \left( i\right)
:\imath \in I\right\} $ and $H:=\left\{ \pi \left( j\right) :\jmath \in
j\right\} $. By the same token, $\wp \mathcal{D}=\left\{ \mathrm{DN}\left(
\varphi \right) :\varphi \in \mathcal{A}\right\} $.

\item  For any $\varphi \in \mathcal{A}$ we define total deviations $%
\partial ^{\text{\textsc{p}}}\left( \varphi \right) $ and $\partial ^{\text{%
\textsc{n}}}\left( \varphi \right) $ as follows, where $\mathrm{AC}^{\text{%
\textsc{p}}} \left( 1\right) :=\mathrm{POS}$ and $\mathrm{AC}^{\text{\textsc{%
n}}}\left( 1 \right) :=\mathrm{NEG}$, while $\mathrm{AC}^{\text{\textsc{p}}%
}\left( \mathrm{DN}\left( \varphi \right) \right) $ and $\mathrm{AC}^{\text{%
\textsc{n}}}\left( \mathrm{DN}\left( \varphi \right)\right) $ abbreviate $%
\mathrm{AC}^{\text{\textsc{p}}}\left( \varphi \right) $ and $\mathrm{AC}^{%
\text{\textsc{n}}}\left( \varphi \right) $, respectively.

\begin{enumerate}
\item  $\partial ^{\text{\textsc{p}}}\left( \varphi \right) :=\mathrm{AC}^{%
\text{\textsc{p}}}\left( \varphi \right) \setminus \mathrm{AC}^{\text{%
\textsc{p}}}\left( \mathrm{AP}\left( \varphi \right) \right) .$

\item  $\partial ^{\text{\textsc{n}}}\left( \varphi \right) :=\mathrm{AC}^{%
\text{\textsc{n}}}\left( \mathrm{AP}\left( \varphi \right) \right) \setminus 
\mathrm{AC}^{\text{\textsc{n}}}\left( \varphi \right) .$
\end{enumerate}
\end{itemize}

\begin{lemma}
For any $\sigma ,\delta \in\mathcal{A}$ the following holds.

\begin{enumerate}
\item  $\partial ^{\text{\textsc{p}}}\left( \sigma \vee \delta \right)
\subseteq \partial ^{\text{\textsc{p}}}\left( \sigma \right) \cup \partial ^{%
\text{\textsc{p}}}\left( \delta \right) \cup \partial _{\sqcup }^{\text{%
\textsc{p}}}\left( \mathrm{AP}\left( \sigma \right) ,\mathrm{AP}\left(
\delta \right) \right) .$

\item  $\partial ^{\text{\textsc{p}}}\left( \sigma \wedge \delta \right)
\subseteq \partial ^{\text{\textsc{p}}}\left( \sigma \right) \cup \partial ^{%
\text{\textsc{p}}}\left( \delta \right) \cup \partial _{\sqcap }^{\text{%
\textsc{p}}}\left( \mathrm{AP}\left( \sigma \right) ,\mathrm{AP}\left(
\delta \right) \right) .$

\item  $\partial ^{\text{\textsc{n}}}\left( \sigma \vee \delta \right)
\subseteq \partial ^{\text{\textsc{n}}}\left( \sigma \right) \cup \partial ^{%
\text{\textsc{n}}}\left( \delta \right) \cup \partial _{\sqcup }^{\text{%
\textsc{n}}}\left( \mathrm{AP}\left( \sigma \right) ,\mathrm{AP}\left(
\delta \right) \right) .$

\item  $\partial ^{\text{\textsc{n}}}\left( \sigma \wedge \delta \right)
\subseteq \partial ^{\text{\textsc{n}}}\left( \sigma \right) \cup \partial ^{%
\text{\textsc{n}}}\left( \delta \right) \cup \partial _{\sqcap }^{\text{%
\textsc{n}}}\left( \mathrm{AP}\left( \sigma \right) ,\mathrm{AP}\left(
\delta \right) \right) .$
\end{enumerate}
\end{lemma}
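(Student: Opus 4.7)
The plan is a case analysis following the four items, each handled by unfolding the definitions of $\partial^{\text{\textsc{p}}}$, $\partial^{\text{\textsc{n}}}$, $\mathrm{S}$, and $\mathrm{AP}$ and invoking the acceptability calculus of Lemma~3. In every case, given a witness test on the left-hand side, the strategy is to examine the chain $\mathrm{AC}(\mathrm{S}(\cdot)) \to \mathrm{AC}(\mathrm{AP}(\cdot)) \to$ the current approximated step, and to route the discrepancy into exactly one of the three sets on the right.

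For items~(1) and~(2) on positive deviations, the equalities $\mathrm{AC}^{\text{\textsc{p}}}(\mathcal{X}\cup\mathcal{Y}) = \mathrm{AC}^{\text{\textsc{p}}}(\mathcal{X})\cup\mathrm{AC}^{\text{\textsc{p}}}(\mathcal{Y})$ and $\mathrm{AC}^{\text{\textsc{p}}}(\mathcal{X}\odot\mathcal{Y}) = \mathrm{AC}^{\text{\textsc{p}}}(\mathcal{X})\cap\mathrm{AC}^{\text{\textsc{p}}}(\mathcal{Y})$ from Lemma~3.3(b) and Lemma~3.4(b) make the routing automatic. For $D \in \partial^{\text{\textsc{p}}}(\sigma\vee\tau)$, I would split on whether $D$ is already lost by $\mathrm{AP}(\sigma)$ (placing it in $\partial^{\text{\textsc{p}}}(\sigma)$) or only lost at the $\sqcup$ step (placing it in $\partial_{\sqcup}^{\text{\textsc{p}}}(\mathrm{AP}(\sigma),\mathrm{AP}(\tau))$), and symmetrically for $\tau$. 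The conjunction case~(2) is analogous using the $\cap$-equality to localise losses at $\mathrm{AP}(\sigma)$, $\mathrm{AP}(\tau)$, or at the $\sqcap$ step. Item~(3) dualises item~(1) via the negative union equality of Lemma~3.3(c): a witness $C$ is either already accepted by $\mathrm{AP}(\sigma)$ or $\mathrm{AP}(\tau)$ while rejected by the corresponding $\mathrm{S}$, putting it in $\partial^{\text{\textsc{n}}}(\sigma)$ or $\partial^{\text{\textsc{n}}}(\tau)$, or else $C$ was freshly introduced by the plucking inside $\sqcup$ and thereby sits in $\partial_{\sqcup}^{\text{\textsc{n}}}(\mathrm{AP}(\sigma),\mathrm{AP}(\tau))$.

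The hard part will be item~(4), because Lemma~3.4(c) only supplies the \emph{inclusion} $\mathrm{AC}^{\text{\textsc{n}}}(\mathcal{X}\odot\mathcal{Y}) \subseteq \mathrm{AC}^{\text{\textsc{n}}}(\mathcal{X})\cap\mathrm{AC}^{\text{\textsc{n}}}(\mathcal{Y})$, not an equality. I would split any $C\in\partial^{\text{\textsc{n}}}(\sigma\wedge\tau)$ according to whether $C\in\mathrm{AC}^{\text{\textsc{n}}}(\mathrm{AP}(\sigma)\odot\mathrm{AP}(\tau))$: if not, then $C$ lies directly in $\partial_{\sqcap}^{\text{\textsc{n}}}(\mathrm{AP}(\sigma),\mathrm{AP}(\tau))$ by its defining equation. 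If yes, then a witnessing pair $E_1\in\mathrm{AP}(\sigma)$, $E_2\in\mathrm{AP}(\tau)$ satisfies $E_1^{\flat}\subseteq C^{\flat}$ and $E_2^{\flat}\subseteq C^{\flat}$ on a common side $\flat\in\{+,-\}$; combined with $\mathrm{S}(\sigma)\odot\mathrm{S}(\tau)\not\Vdash^{\mp}C$, at least one of $\mathrm{S}(\sigma)$, $\mathrm{S}(\tau)$ must fail to contain a member whose $\flat$-side lies inside $C^{\flat}$, and matching this one-sided deficit against the corresponding $\mathrm{AP}(\cdot)\Vdash^{\flat}C$ then places $C$ into $\partial^{\text{\textsc{n}}}(\sigma)$ or $\partial^{\text{\textsc{n}}}(\tau)$. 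The main obstacle I expect is precisely this side-matching bookkeeping: the relation $\Vdash^{\mp}$ used in $\mathrm{AC}^{\text{\textsc{n}}}$ collapses both sides, so the argument must carefully preserve the distinguished $\flat$ from the joint-acceptance witness while translating it back into the two-sided definition of $\partial^{\text{\textsc{n}}}$.
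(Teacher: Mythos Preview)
Your plan for items (1)--(3) is exactly the paper's argument: the paper states ``Straightforward via boolean inclusion $A\setminus B\subseteq (A\setminus C)\cup (C\setminus B)$'' and in Appendix~A inserts the intermediate set $C=\mathrm{AC}^{\text{\textsc{p}}}(\mathrm{AP}(\sigma))\cup\mathrm{AC}^{\text{\textsc{p}}}(\mathrm{AP}(\tau))$ (resp.\ $\cap$, resp.\ the $\textsc{n}$-analogues), then invokes Lemma~3 termwise. Your element-chasing is the pointwise version of the same computation.

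For item~(4) you are right that only the inclusion $\mathrm{AC}^{\text{\textsc{n}}}(\mathcal{X}\odot\mathcal{Y})\subseteq \mathrm{AC}^{\text{\textsc{n}}}(\mathcal{X})\cap\mathrm{AC}^{\text{\textsc{n}}}(\mathcal{Y})$ is available, and this is where your proposal has a genuine gap. From ``no $D\in\mathrm{S}(\sigma)$ has $D^{\flat}\subseteq C^{\flat}$'' you cannot infer $\mathrm{S}(\sigma)\not\Vdash^{\mp}C$: some $D\in\mathrm{S}(\sigma)$ may still satisfy $D^{\natural}\subseteq C^{\natural}$ on the \emph{other} side $\natural\neq\flat$, giving $C\in\mathrm{AC}^{\text{\textsc{n}}}(\sigma)$ and hence $C\notin\partial^{\text{\textsc{n}}}(\sigma)$. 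Concretely, if $\mathrm{S}(\sigma)=\{D_1\}$ with $D_1^{+}\subseteq C_f$ but $D_1^{-}\not\subseteq C_g$, and $\mathrm{S}(\tau)=\{D_2\}$ with $D_2^{-}\subseteq C_g$ but $D_2^{+}\not\subseteq C_f$, then $C\in\mathrm{AC}^{\text{\textsc{n}}}(\sigma)\cap\mathrm{AC}^{\text{\textsc{n}}}(\tau)$ while $(D_1\Cup D_2)\not\subseteq^{\mp}C$; so your routing into $\partial^{\text{\textsc{n}}}(\sigma)\cup\partial^{\text{\textsc{n}}}(\tau)$ fails. You flag this ``side-matching bookkeeping'' as the main obstacle, but the sketch does not overcome it. Note that the paper's Appendix~A does not do better: its displayed chains for items~3--4 are written with the set differences in the order of $\partial^{\text{\textsc{p}}}$ rather than $\partial^{\text{\textsc{n}}}$, and once one reverses them to match the definition $\partial^{\text{\textsc{n}}}(\varphi)=\mathrm{AC}^{\text{\textsc{n}}}(\mathrm{AP}(\varphi))\setminus\mathrm{AC}^{\text{\textsc{n}}}(\varphi)$, the step requiring $[\mathrm{AC}^{\text{\textsc{n}}}(\mathrm{AP}(\sigma))\cap\mathrm{AC}^{\text{\textsc{n}}}(\mathrm{AP}(\tau))]\setminus\mathrm{AC}^{\text{\textsc{n}}}(\mathrm{S}(\sigma)\odot\mathrm{S}(\tau))\subseteq\partial^{\text{\textsc{n}}}(\sigma)\cup\partial^{\text{\textsc{n}}}(\tau)$ needs precisely the missing reverse inclusion of Lemma~3.4(c).
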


\begin{proof}
Straightforward via boolean inclusion\newline
$A\setminus B\subseteq \left( A\setminus C\right) \cup \left( C\setminus
B\right) $ (cf. Appendix B).
\end{proof}

\begin{lemma}
For any $\varphi \in\mathcal{A}$ the following conditions hold.

\begin{enumerate}
\item  $\left| \partial ^{\text{\textsc{p}}}\left( \varphi \right) \right|
<cs\left( \varphi \right) \cdot L^{2}\left( \QATOP{m-\ell -1}{k-\ell -1}%
\right) .$

\item  $\left| \partial ^{\text{\textsc{n}}}\left( \varphi \right) \right|
^{\ast }\leq cs\left( \varphi \right) \cdot 2^{-p}L^{2}\left| \mathcal{F}%
\right| $.

\item  If $\mathrm{AC}^{\text{\textsc{p\negthinspace }}}\left( \mathrm{AP\!}%
\left( \varphi \right) \right) \!\neq \!\emptyset \!$ then\negthinspace\, $%
\left| \mathrm{AC}^{\text{\textsc{n}}}\!\left( \mathrm{AP\!}\left( \varphi
\right) \right) \right| ^{\ast }\!\geq \!\dfrac{1}{2}\left| \mathcal{F}%
\right| $.
\end{enumerate}
\end{lemma}

\begin{proof}
1--2 follows from Lemmata 10, 11 by induction on $cs\left( \varphi \right) $.

3: $\mathrm{AC}^{\text{\textsc{p}}}\left( \mathrm{AP\!}\left( \varphi
\right) \right) \neq \!\emptyset $ implies $\mathrm{AP}\!\left( \varphi
\right) \neq \!\emptyset $, so there is at least one $D\in \mathrm{AP}%
\!\left( \varphi \right) $, $\left\| \text{\textsc{v}}\left( D\right)
\right\| \leq \ell $. Now by Lemma 9, $\left| \mathrm{AC}^{\text{\textsc{n}}%
}\left( \mathrm{AP}\left( \varphi \right) \right) \right| ^{\ast }\geq
\left| \mathcal{R}_{\subseteq }\left( D\right) \right| \geq \dfrac{1}{2}%
\left| \mathcal{F}\right| $, as $\mathrm{AC}^{\text{\textsc{n}}}\left( 
\mathrm{AP}\left( \varphi \right) \right) ^{\ast }\supseteq R_{\subseteq
}\left( D\right) $.
\end{proof}

\begin{itemize}
\item  Final assumptions$.$ Assuming $m\gg 0$ we let
\end{itemize}

$\fbox{$m=k^{4},\ k=2\ell ^{2},\ p=\ell \log _{2}\!m,\ L=\left(
p\!-\!1\right) ^{\ell }\ell !$}$

\begin{theorem}
\ Suppose that $\mathrm{AC}^{\text{\textsc{p}}}\left( \varphi \right) =%
\mathrm{POS}$ and $\mathrm{AC}^{\text{\textsc{n}}}\left( \varphi \right)
=\emptyset $ both hold for a given $\varphi \in \mathcal{A}$. Then for
sufficiently large $m$, $cs\left( \varphi \right) >m^{\frac{1}{5}m^{\frac{1}{%
8}}}$.
\end{theorem}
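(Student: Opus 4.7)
The plan is to split on whether the approximation $\mathrm{AP}(\varphi)$ accepts any positive test, i.e.\ on whether $\mathrm{AC}^{\text{\textsc{p}}}(\mathrm{AP}(\varphi)) = \emptyset$, and to extract in each case a lower bound on $\mathrm{cs}(\varphi)$ directly from Lemma 12 by feeding in the two hypotheses.

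First, assume $\mathrm{AC}^{\text{\textsc{p}}}(\mathrm{AP}(\varphi)) = \emptyset$. Then the hypothesis $\mathrm{AC}^{\text{\textsc{p}}}(\varphi) = \mathrm{POS}_2$ together with the definition of $\partial^{\text{\textsc{p}}}$ forces $\partial^{\text{\textsc{p}}}(\varphi) = \mathrm{POS}_2$. Combining the exact count of $|\mathrm{POS}_2|$ provided by Lemma 1 with the Lemma 12 upper bound for $|\partial^{\text{\textsc{p}}}(\varphi)|$ that is proportional to $\binom{m-\ell-1}{k-\ell-1}$, and cancelling the shared factor $\binom{m-k}{k} + k\binom{m-k}{k-1}$, yields
\[
\mathrm{cs}(\varphi) \;\ge\; \frac{\binom{m}{k}}{L^{2}\binom{m-\ell-1}{k-\ell-1}} \;=\; \frac{1}{L^{2}}\prod_{i=0}^{\ell}\frac{m-i}{k-i}.
\]

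In the complementary case, $\mathrm{AC}^{\text{\textsc{p}}}(\mathrm{AP}(\varphi)) \neq \emptyset$, part 3 of Lemma 12 supplies $|\mathrm{AC}^{\text{\textsc{n}}}(\mathrm{AP}(\varphi))| \geq \tfrac{1}{4}(k-1)^{2m}$. Since the hypothesis $\mathrm{AC}^{\text{\textsc{n}}}(\varphi) = \emptyset$ makes $\partial^{\text{\textsc{n}}}(\varphi)$ coincide with $\mathrm{AC}^{\text{\textsc{n}}}(\mathrm{AP}(\varphi))$, the Lemma 12 upper bound for $|\partial^{\text{\textsc{n}}}(\varphi)|$ proportional to $2^{1-p}(k-1)^{2m}$ delivers
\[
\mathrm{cs}(\varphi) \;\ge\; \frac{2^{p-3}}{L^{2}}.
\]

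It then remains to verify that each of the two bounds exceeds $\sqrt[7]{m}^{\sqrt[8]{m}} = m^{\ell/7}$ under the final assumptions $\ell = m^{1/8}$, $p = \ell\log_{2} m$, $L = (p-1)^{\ell}\ell!$, $k = m^{1/4}$. A routine logarithmic calculation yields $L^{2} = m^{(1/2+o(1))\,\ell}$; meanwhile in Case 1 the product dominates $(m/k)^{\ell+1} = m^{(3/4)(\ell+1)}$, so $\mathrm{cs}(\varphi) \geq m^{(1/4+o(1))\,\ell}$, and in Case 2 one has $2^{p} = m^{\ell}$, so $\mathrm{cs}(\varphi) \geq m^{(1/2+o(1))\,\ell}$. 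The step I expect to require the most care is this last asymptotic bookkeeping: $L$ enters squared, and the parameters $\ell$, $p$, $L$ have been calibrated precisely so that $L^{-2}$ is safely absorbed by whichever of $(m/k)^{\ell+1}$ or $2^{p}$ dominates, with exponent $\gtrsim \ell/4 > \ell/7$ left over in each case.
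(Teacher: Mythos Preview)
Your proposal is correct and follows essentially the same route as the paper: the identical case split on whether $\mathrm{AC}^{\text{\textsc{p}}}(\mathrm{AP}(\varphi))=\emptyset$, the same invocation of Lemma 12 (and Lemma 1 for $|\mathrm{POS}_2|$) in each branch to obtain $\mathrm{cs}(\varphi)\geq L^{-2}\binom{m}{k}\binom{m-\ell-1}{k-\ell-1}^{-1}$ and $\mathrm{cs}(\varphi)\geq \tfrac{1}{8}L^{-2}2^{p}$ respectively, and the same asymptotic check against $m^{\ell/7}$. The only cosmetic differences are that you write the binomial ratio as $\prod_{i=0}^{\ell}\frac{m-i}{k-i}\geq (m/k)^{\ell+1}$ whereas the paper bounds it below by $\bigl(\frac{m-\ell}{k}\bigr)^{\ell}$, and you phrase the final estimates as $m^{(1/4+o(1))\ell}$ and $m^{(1/2+o(1))\ell}$ where the paper records the explicit constants $1/7$ and $1/3$ worked out in Appendix~B.
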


\begin{proof}
Consider two cases (cf. Appendix C).

1: Assume $\mathrm{AC}^{\mathrm{p}}\left( \mathrm{AP}\left( \varphi \right)
\right) =\emptyset $. By $\mathrm{AC}^{\text{\textsc{p}}}\left( \varphi
\right) =\mathrm{POS}$ we have $\partial ^{\text{\textsc{p}}}\left( \varphi
\right) =\mathrm{AC}^{\text{\textsc{p}}}\left( \varphi \right) \setminus 
\mathrm{AC}^{\text{\textsc{p}}}\left( \mathrm{AP}\left( \varphi \right)
\right) =\mathrm{POS}$. Hence by Lemma 13 (1),\newline
$\qquad cs\left( \varphi \right) \cdot \left( \QATOP{m-\ell -1}{k-\ell -1}%
\right) L^{2}\geq \left| \partial ^{\text{\textsc{p}}}\left( \varphi \right)
\right| =\left| \mathrm{POS}\right| =\left( \QATOP{m}{k}\right) .$\newline
Hence $cs\left( \varphi \right) \geq \left( \QATOP{m}{k}\right) \!\left( 
\QATOP{m-\ell -1}{k-\ell -1}\right) ^{-1}\!L^{-2}\!>\!\left( \frac{m-\ell }{k%
}\right) ^{\ell }\!L^{-2}\newline
>m^{\frac{1}{5}m^{\frac{1}{8}}}$.

2: Otherwise, assume $\mathrm{AC}^{\text{\textsc{p}}}\left( \mathrm{AP}%
\left( \varphi \right) \right) \neq \emptyset $. So $\mathrm{AC}^{\text{%
\textsc{n}}}\left( \varphi \right) =\emptyset $ implies $\partial ^{\text{%
\textsc{n}}}\left( \varphi \right) =\mathrm{AC}^{\text{\textsc{n}}}\left( 
\mathrm{AP}\left( \varphi \right) \right) \setminus \mathrm{AC}^{\text{%
\textsc{n}}}\left( \varphi \right) \newline
=\mathrm{AC}^{\text{\textsc{n}}}\left( \mathrm{AP}\left( \varphi \right)
\right) $. Hence $cs\left( \varphi \right) \cdot 2^{-p}L^{2}\left| \mathcal{F%
}\right| \geq \left| \partial ^{\text{\textsc{n}}}\left( \varphi \right)
\right| ^{\ast }\newline
\geq \dfrac{1}{2}\left| \mathcal{F}\right| $ by Lemma 13 (2, 3) and then $%
cs\left( \varphi \right) \geq 2^{p-1}L^{-2}>m^{\frac{1}{2}m^{\frac{1}{8}%
}}>m^{\frac{1}{5}m^{\frac{1}{8}}}$.
\end{proof}

\subsubsection{\protect\large Semantics}

\begin{definition}
Consider variable assignments 
\begin{equation*}
\mathrm{VA}=\left\{ \varepsilon :\left[ n\right] \rightarrow \left\{
0,1\right\} \right\} \text{.}
\end{equation*}
For any $i \in \left[ n\right] $, literals $v_{i}$, $\lnot v_{i}$, formulas $%
\varphi _{1},\cdots ,\varphi _{r}\in \mathcal{A}$ and \thinspace $%
\varepsilon \in \mathrm{VA}$, Boolean values $\left\| -\right\|
_{\varepsilon }\in \left\{ 0,1\right\} $ arise as follows.

\begin{enumerate}
\item  $\left\| 1\right\| _{\varepsilon }:=1,\ \left\| 0\right\|
_{\varepsilon }=0$.

\item  $\left\| v_{i}\right\| _{\varepsilon }:=\varepsilon \left( i\right) $.

\item  $\left\| \lnot v_{i}\right\| _{\varepsilon }:=1-\left\| v_{i}\right\|
_{\varepsilon }=1-\varepsilon \left( i\right) $.

\item  $\left\| \varphi _{1},\vee \cdots \vee \varphi _{r}\right\|
_{\varepsilon }:=\max \left\{ \left\| \varphi _{1}\right\| _{\varepsilon
},\cdots ,\left\| \varphi _{r}\right\| _{\varepsilon }\right\} $.

\item  $\left\| \varphi _{1}\wedge \cdots \wedge \varphi _{r}\right\|
_{\varepsilon }:=\min \left\{ \left\| \varphi _{1}\right\| _{\varepsilon
},\cdots ,\left\| \varphi _{r}\right\| _{\varepsilon }\right\} $.
\end{enumerate}

Furthermore, for any $D\in \mathcal{D}$, $\mathcal{X}\subseteq \mathcal{D}$
we define formulas $F\left( D\right) $, $F\left( \mathcal{X}\right) $ and
Boolean values $\left\| D\right\| _{\varepsilon }$, $\left\| \mathcal{X}%
\right\| _{\varepsilon }$:

\begin{enumerate}
\item  $F\left( D\right) :=\underset{\pi \left( i\right) \in D^{+}}{%
\bigwedge }v_{i}\wedge \underset{\pi \left( j\right) \in D^{-}}{\bigwedge }%
\lnot v_{j}$ and \newline
$F\left( \mathcal{X}\right) :=\underset{D\in \mathcal{X}}{\bigvee }F\left(
D\right) $.

\item  $\left\| D\right\| _{\varepsilon }:=\left\| \mathrm{F}\left( D\right)
\right\| _{\varepsilon }$ and $\left\| \mathcal{X}\right\| _{\varepsilon
}:=\left\| \mathrm{F}\left( \mathcal{X}\right) \right\| _{\varepsilon }$.
\end{enumerate}
\end{definition}

\begin{lemma}
$\left\| \varphi \right\| _{\varepsilon }=\left\| \mathrm{DN}\left( \varphi
\right) \right\| _{\varepsilon }$ holds for any $\varphi \in A$ and $%
\varepsilon \in \mathrm{VA}$.
\end{lemma}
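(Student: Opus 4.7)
The proof proceeds by induction on the structural complexity $\mathrm{cs}(\varphi)$, with literals as base cases and the two propositional connectives as inductive cases.

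The base cases are immediate from Definition~14 together with the definitions of $\mathrm{S}$. For example, $\mathrm{S}(v_i)=\{\langle\{\pi(i)\},\emptyset\rangle\}$, so $\|\mathrm{S}(v_i)\|_\vartheta = \|\mathrm{F}(\langle\{\pi(i)\},\emptyset\rangle)\|_\vartheta = \|v_i\|_\vartheta$ by clauses~1, 4 and 8; the literals $\top$, $\bot$ and $\lnot v_i$ are analogous, using $\mathrm{S}(\bot)=\emptyset$ and clause~3 for the bottom case.

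For the disjunctive step $\varphi = \sigma \vee \tau$, I would use $\mathrm{S}(\sigma\vee\tau)=\mathrm{S}(\sigma)\cup\mathrm{S}(\tau)$ and the fact that $\mathrm{F}$ distributes over set unions, namely $\mathrm{F}(\mathrm{S}(\sigma)\cup\mathrm{S}(\tau))=\mathrm{F}(\mathrm{S}(\sigma))\vee\mathrm{F}(\mathrm{S}(\tau))$. Clause~6 of Definition~14 then splits the semantic value accordingly, and the induction hypothesis applied to $\sigma$ and $\tau$ closes the case.

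For the conjunctive step $\varphi = \sigma \wedge \tau$ I use $\mathrm{S}(\sigma\wedge\tau)=\mathrm{S}(\sigma)\odot\mathrm{S}(\tau)$. Distributing the three-valued $\wedge$ over the $\bigvee$-structures of $\mathrm{F}(\mathrm{S}(\sigma))$ and $\mathrm{F}(\mathrm{S}(\tau))$ reduces the goal to the per-pair identity $\|D\,\Cup\, E\|_\vartheta = \|D\|_\vartheta \wedge \|E\|_\vartheta$ for every $\langle D,E\rangle \in \mathrm{S}(\sigma)\times\mathrm{S}(\tau)$. In the compatible case, $\mathrm{F}(D\,\Cup\, E)$ is the flat conjunction of the literals of $\mathrm{F}(D)$ and $\mathrm{F}(E)$, no clash $v_i$ vs.\ $\lnot v_i$ appears, and clause~7 delivers the factorisation at once. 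In the conflicting case, $D\,\Cup\, E=\langle\emptyset,\emptyset\rangle$ by definition of $\Cup$, while flattening $\mathrm{F}(D)\wedge\mathrm{F}(E)$ exhibits both $v_i$ and $\lnot v_i$ for the offending edge, so clause~7 forces the value to $0$; the same value must therefore be assigned to $\langle\emptyset,\emptyset\rangle$ under the author's convention. This reconciliation of the $\Cup$-convention with clause~7 is the only genuine technical point; once it is settled, the remaining steps are routine algebraic manipulation in the three-valued semantic of Definition~14, and the induction hypothesis applied to $\sigma$ and $\tau$ closes the final equality.
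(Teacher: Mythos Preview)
Your overall plan matches the paper's: structural induction on $\mathrm{cs}(\varphi)$, with the $\wedge$-case as the only substantive step. The paper dismisses the base cases and the $\vee$-case as ``trivial'' and spells out only the $\wedge$-case, arguing directly via the chain of equivalences for value $1$ (then asserting that $0$ and $\Box$ are ``treated analogously''). Your organisation via the per-pair identity $\|D\Cup E\|_\vartheta = \|D\|_\vartheta \wedge \|E\|_\vartheta$ is a natural repackaging of the same idea, and you are in fact more careful than the paper about the $\langle\emptyset,\emptyset\rangle$ convention, which the paper handles only by a parenthetical remark that such terms ``may be omitted''.

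There is, however, one genuine gap. Your step ``distributing the three-valued $\wedge$ over the $\bigvee$-structures'' presupposes the law $a\wedge(b\vee c)=(a\wedge b)\vee(a\wedge c)$ in the semantics of Definition~14. But that definition stipulates $\Box\vee 0=0$, which kills distributivity: for $a=\Box$, $b=1$, $c=0$ one gets $\Box\wedge(1\vee 0)=\Box$ while $(\Box\wedge 1)\vee(\Box\wedge 0)=\Box\vee 0=0$. Concretely, with $\varphi=v_1\wedge(v_2\vee v_3)$ and $\vartheta(1)=\Box$, $\vartheta(2)=1$, $\vartheta(3)=0$, the definitions give $\|\varphi\|_\vartheta=\Box$ but $\|\mathrm{S}(\varphi)\|_\vartheta=0$, so the lemma as stated is actually false under the paper's own truth tables. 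The paper dodges this by proving only the value-$1$ equivalence explicitly (which is all that is needed later, via~$\approx$); its claim that the other values are analogous fails for exactly the reason your distributivity step fails. The fix is either to restrict the argument to the equivalence $\|\varphi\|_\vartheta=1\Leftrightarrow\|\mathrm{S}(\varphi)\|_\vartheta=1$, as the paper effectively does, or to observe that the table entry $\Box\vee 0=0$ should read $\Box\vee 0=\Box$ (standard strong Kleene), under which distributivity holds and your argument is sound as written.
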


\begin{proof}
We argue by induction on $cs\left( \varphi \right) $. Consider induction
step $\varphi =\sigma \wedge \delta $ where $\mathrm{DN}\left( \sigma
\right) ,\mathrm{DN}\left( \delta \right) \neq \emptyset $. So $\mathrm{DN}%
\left( \varphi \right) =\mathrm{DN}\left( \sigma \right) \odot \mathrm{DN}%
\left( \delta \right) =\left\{ D\Cup \!E:\left\langle D,E\right\rangle \in 
\mathrm{DN}\left( \sigma \right) \times \mathrm{DN}\left( \delta \right)
\right\} $, which yields

$\left\| \mathrm{DN}\left( \varphi \right) \right\| _{\varepsilon }=\bigvee
\left\{ \left\| D\Cup \!E\right\| _{\varepsilon }:\left\langle
D,E\right\rangle \in \mathrm{DN}\left( \sigma \right) \times \mathrm{DN\!}%
\left( \delta \right) \right\} $\newline
$=\bigvee \left\{ \left\| \left\langle D^{+}\cup \!E^{+},D^{-}\cup
E^{-}\right\rangle \right\| _{\varepsilon }:D\in \mathrm{DN}\left( \sigma
\right) \, \&\, E\in \mathrm{DN}\left( \delta \right) \right\} $.\newline
(We omit possible occurrences of $D\Cup E=\emptyset $ for \newline
$\left( D^{+}\!\cup \!E^{+}\right) \!\cap \!\left( D^{-}\!\cup
\!E^{-}\right) \!\neq \!\emptyset $, since they imply $\left\| \left\langle
D^{+}\cup \!E^{+},D^{-}\cup E^{-}\right\rangle \right\| _{\varepsilon }=0$.)
So by the induction hypothesis we get $\smallskip $

$\left\| \mathrm{DN}\left( \varphi \right) \right\| _{\varepsilon }=1
\smallskip $

$\Leftrightarrow \smallskip \left( \exists D\in \mathrm{DN}\left( \sigma
\right) \right) \left( \exists E\in \mathrm{DN}\left( \delta \right) \right) 
\newline
\qquad \left( 
\begin{array}{l}
\left( \forall \pi \left( i\right) \in D^{+}\cup E^{+}\right) \left\|
v_{i}\right\| _{\varepsilon }=1\ \& \\ 
\left( \forall \pi \left( j\right) \in D^{-}\cup E^{-}\right) \left\| \lnot
v_{j}\right\| _{\varepsilon }=1
\end{array}
\right) $

$\Leftrightarrow \smallskip \left( \exists D\in \mathrm{DN}\left( \sigma
\right) \right) \left( 
\begin{array}{l}
\left( \forall \pi \left( i\right) \in D^{+}\right) \left\| v_{i}\right\|
_{\varepsilon }=1\,\&\, \\ 
\left( \forall \pi \left( j\right) \in D^{-}\right) \left\| \lnot
v_{j}\right\| _{\varepsilon }=1
\end{array}
\right) \ \&$

$\quad \ \left( \exists E\in \mathrm{DN\!}\left( \delta \right) \right)
\left( 
\begin{array}{l}
\left( \forall \pi \left( i\right) \in E^{+}\right) \left\| v_{i}\right\|
_{\varepsilon }=1\,\&\, \\ 
\left( \forall \pi \left( j\right) \in E^{-}\right) \left\| \lnot
v_{j}\right\| _{\varepsilon }=1
\end{array}
\right) \smallskip $

$\Leftrightarrow \smallskip \left\| \mathrm{DN}\left( \sigma \right)
\right\| _{\varepsilon }=1=\left\| \mathrm{DN}\left( \delta \right) \right\|
_{\varepsilon }$

$\Leftrightarrow \left\| \sigma \right\| _{\widehat{\varepsilon }}=1=\left\|
\delta \right\| _{\varepsilon }$,\smallskip \newline
which yields $\smallskip \left\| \varphi \right\| _{\varepsilon }=\left\|
\sigma \wedge \delta \right\| _{\varepsilon }=\left\| \mathrm{DN}\left(
\varphi \right) \right\| _{\varepsilon }$.

Basis of induction and case $\varphi =\sigma \vee \tau $ are trivial.
\end{proof}

\begin{definition}
For any $\mathcal{X,Y}\subseteq \mathcal{D}$ we let\newline
$\mathrm{VA}_{0}:=\left\{ \varepsilon _{0}\in \mathrm{VA}:\left| \varepsilon
^{-1}\left( 1\right) \right| \leq \binom{k}{2}\right\} $ \newline
and define two equivalences $\sim $ and $\sim _{0}$ : \newline
\fbox{$
\begin{array}{l}
\mathcal{X}\sim \mathcal{Y:\Leftrightarrow }\left( \forall \varepsilon \in 
\mathrm{VA}\right) \left\| \mathcal{X}\right\| _{\varepsilon }=\left\| 
\mathcal{Y}\right\| _{\varepsilon }, \\ 
\mathcal{X}\sim _{0}\mathcal{Y:\Leftrightarrow }\left( \forall \varepsilon
\in \mathrm{VA}_{0}\right) \left\| \mathcal{X}\right\| _{\varepsilon
}=\left\| \mathcal{Y}\right\| _{\varepsilon }.
\end{array}
$}\smallskip

Obviously $\sim $ in stronger than $\sim _{0}$.
\end{definition}

\begin{lemma}
\item  Suppose $\varphi \in \!\mathcal{A}$ satisfies $\mathrm{DN}\left(
\varphi \right) \sim _{0}\mathrm{CLIQ}_{2}$. Then $\mathrm{AC}^{\text{%
\textsc{p}}}\left( \varphi \right) \!=\!\mathrm{POS}$ and $\mathrm{AC}^{%
\text{\textsc{n}}}\left( \varphi \right) \!=\!\emptyset $. So by Theorem 14, 
$cs\left( \varphi \right) >m^{\frac{1}{5}m^{\frac{1}{8}}}$ for sufficiently
large $m$. By Lemma 16, the latter holds for any $\varphi \sim _{0}\mathrm{%
CLIQ}_{2}$.
\end{lemma}

\begin{proof}
Suppose $\mathrm{DN}\left( \varphi \right) \sim _{0}\mathrm{CLIQ}_{2}$. We
show that $\mathrm{AC}^{\text{\textsc{p}}}\left( \varphi \right) \!=\!%
\mathrm{POS}$. Let $K=\left\{ \pi \left( i\right) :i\in S\right\} \in 
\mathrm{POS}$, which yields $\left\| \mathrm{CLIQ}_{2}\right\| _{\varepsilon
}=1$ for $\varepsilon \in \mathrm{VA}_{0}$ with 
\begin{equation*}
\varepsilon \left( i\right) \!:=\!\!\left\{ 
\begin{array}{lll}
1,\! & \!\text{if\negthinspace \negthinspace }\!\! & \!\!\!i\in S\text{,} \\ 
0,\! & \!\text{else,} & \!\!\!
\end{array}
\right. \ 
\end{equation*}
So $\left\| \mathrm{DN}\left( \varphi \right) \right\| _{\varepsilon }=1$
holds by the assumption, and hence there exists $D\in \mathrm{DN}\left(
\varphi \right) $ satisfying $\left\| D\right\| _{\varepsilon }=\!1$ for $%
D^{+}=\left\{ \pi \left( i\right) :i\in T\right\} $. But then for any $i\in
T $ we get $\varepsilon \left( i\right) =1$, which yields $T\subseteq S$ and
hence $D\subseteq ^{+}K$. So $\mathrm{POS}\subseteq \mathrm{AC}^{\text{%
\textsc{p}}}\left( \varphi \right) \subseteq \mathrm{POS}$. Thus $\mathrm{AC}%
^{\text{\textsc{p}}}\left( \varphi \right) =\mathrm{POS}$, as required.

Suppose there is a $C_{f}\in \mathrm{AC}^{\text{\textsc{n}}}\left( \varphi
\right) $, i.e. there exists $E\!\in \mathrm{DN}\left( \varphi \right) $
with $E\subseteq ^{+}C_{f}$ for $E^{+}\!=\!\left\{ \pi \left( a_{i}\right)
:i\in S\right\} $. Define $\varepsilon \in \mathrm{VA}_{0}$ as above. Then $%
\left\| E\right\| _{\varepsilon }=\!1$ and hence $\left\| \mathrm{DN}\left(
\varphi \right) \right\| _{\varepsilon }=1$. Hence $\left\| \mathrm{CLIQ}%
_{2}\right\| =1$ holds by the assumption, and therefore there exist $K\in 
\mathrm{POS}$ and $D\in \mathcal{D}$ such that\ $K\subseteq D^{+}$\ and $%
\left\| D\right\| _{\varepsilon }=1$. But arguing as above this would imply $%
D^{+}\subseteq E^{+}$ and hence $K\subseteq E^{+}\subseteq C_{f}$ , which
contradicts Lemma 1. Thus $\mathrm{AC}^{\text{\textsc{n}}}\left( \varphi
\right) =\emptyset $, as required.
\end{proof}

\subsection{\protect\large General Boolean case}

\begin{itemize}
\item  Let $\mathcal{B}$ denote full Boolean algebra with constants $1,0$,
operations $\vee ,\wedge ,\lnot $ and variables $v_{1},\cdots ,v_{n}$. Note
that $\mathcal{A\subset B}$.

\item  Arbitrary Boolean formulas $\varphi \in \mathcal{B}$ are convertible
to equivalent DMN formulas $\varphi ^{\ast }\in \mathcal{A}$ that arise by
applying as long as possible De\thinspace Morgan rewriting rules 1--4 :
\end{itemize}

\begin{enumerate}
\item  $\lnot \,1\hookrightarrow 0\,,\ \lnot \,0\hookrightarrow 1.$

\item  $\lnot \left( \sigma \vee \tau \right) \hookrightarrow \lnot \,\sigma
\wedge \lnot \,\tau .$

\item  $\lnot \left( \sigma \wedge \tau \right) \hookrightarrow \lnot \sigma
\vee \lnot \tau .$

\item  $\lnot \lnot \sigma \hookrightarrow \sigma .$
\end{enumerate}

It is a folklore that circuit size of $\varphi ^{\ast }$\ is at most that of 
$\varphi $. Namely, circuit structure of $\varphi ^{\ast }$ arises by
successively replacing $\lnot \,1$ by $0$, $\lnot \,0$ by $1$, gates $\vee $
and $\wedge $ occurring in the scope of $\lnot $ in $\varphi $ by
complementary gates $\wedge $ and $\vee $, respectively, while deleting
gates $\lnot $ in question and all double negations $\lnot \lnot $. Note
that $\lnot $ may occur in $\varphi ^{\ast }$\ only in literals $\lnot v_{i}$%
, if at all.

\begin{itemize}
\item  Semantics in $\mathcal{B}$ is defined as in $\mathcal{A}$ with
respect to variable assignments $\varepsilon \in \mathrm{VA}$ (cf.
Definition 15).
\end{itemize}

\begin{lemma}
For any $\varphi \in \mathcal{B}$ and $\varepsilon \in \mathrm{VA}$ we have $%
cs\left( \varphi ^{\ast }\right)\!\leq \!cs\left( \varphi \right) $ and 
\begin{equation*}
\left\| \varphi \right\| _{\varepsilon }=\left\| \varphi ^{\ast }\right\|
_{\varepsilon }=\left\| \mathrm{DN}\left( \varphi ^{\ast }\right) \right\|
_{\varepsilon }.
\end{equation*}
\end{lemma}

\begin{proof}
$\left\| \varphi \right\| _{\varepsilon }=\left\| \varphi ^{\ast }\right\|
_{\varepsilon }$ holds by trivial induction on $\left\| \varphi ^{\ast
}\right\| _{\varepsilon }=\left\| \mathrm{DN}\left( \varphi ^{\ast }\right)
\right\| _{\varepsilon }$, while $\left\| \varphi ^{\ast }\right\|
_{\varepsilon }=\left\| \mathrm{DN}\left( \varphi ^{\ast }\right) \right\|
_{\varepsilon }$ follows from Lemma 16.
\end{proof}

\begin{theorem}
Suppose that $\varphi \!\in \!\mathcal{B}$ provides a solution of $\mathrm{%
CLIQUE}_{m,k}$ in full Boolean logic and semantics involved. Then for
sufficiently large $m=k^{4}$, $cs\left( \varphi \right) $ is exponentially
large in $m$.
\end{theorem}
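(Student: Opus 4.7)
The plan is to reduce the full Boolean case directly to the DMN case already established in Corollary 21, using the DeMorgan-to-DMN conversion quantified by Lemma 22. Given $\varphi \in \mathcal{B}$ that solves $\mathrm{CLIQ}_{2}$ in full Boolean logic, I would apply the standard DeMorgan rewriting rules 1--4 to obtain an equivalent DMN formula $\varphi^{\ast} \in \mathcal{A}$. Since these rules preserve classical logical equivalence, $\varphi^{\ast}$ also solves $\mathrm{CLIQ}_{2}$, i.e.\ $\varphi^{\ast} \approx \mathrm{CLIQ}_{2}$ in the sense of Definition 16 on the assignments corresponding to double graphs (on which the three-valued semantics collapses to the two-valued one on the relevant edges, so DeMorgan soundness is not an issue).

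The theorem then follows by chaining two bounds already in hand: Corollary 21 applied to $\varphi^{\ast}$ yields $\mathrm{cs}(\varphi^{\ast}) > k^{\frac{4}{7}\sqrt{k}}$ for sufficiently large $k$, while Lemma 22 yields $\mathrm{cs}(\varphi^{\ast}) \leq 2\,\mathrm{cs}(\varphi)$. Combining the two inequalities immediately produces $\mathrm{cs}(\varphi) > \frac{1}{2}\,k^{\frac{4}{7}\sqrt{k}}$.

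The only remaining issue is absorbing the spurious factor $\frac{1}{2}$ into the stated bound. Although $k^{\frac{4}{7}\sqrt{k}}$ is what Corollary 21 writes down, the underlying chain of inequalities in the proof of Theorem 13 carries substantial slack: Case 1 actually establishes $\mathrm{cs}(\varphi^{\ast}) \geq L^{-2}\bigl(\frac{m-\ell}{k}\bigr)^{\ell}$, which upon substituting $m = k^{4}$, $\ell = m^{1/8}$, $p = \ell\log_{2}m$, $L = (p-1)^{\ell}\ell!$ exceeds $k^{\frac{4}{7}\sqrt{k}} = m^{(1/7)\ell}$ by a super-polynomial factor for large $m$; and Case 2 establishes $\mathrm{cs}(\varphi^{\ast}) \geq \frac{1}{8}L^{-2}2^{p} > m^{(1/3)\ell} = k^{\frac{4}{3}\sqrt{k}}$, which is much larger still. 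Either way the multiplicative gap comfortably swallows the factor $\frac{1}{2}$, giving $\mathrm{cs}(\varphi) > k^{\frac{4}{7}\sqrt{k}}$ for all sufficiently large $k$, as required. Honestly there is no real obstacle remaining in this last theorem: the entire content of the Boolean extension is packaged in Lemma 22, with the heavy lifting already done by Corollary 21; the present proof is essentially pure bookkeeping.
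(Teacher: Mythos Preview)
Your proposal is correct and follows essentially the same route as the paper's own proof: both convert $\varphi$ to the DMN formula $\varphi^{\ast}$, invoke Corollary 21 together with Lemma 22 to obtain $\mathrm{cs}(\varphi) > \tfrac{1}{2}k^{\frac{4}{7}\sqrt{k}}$, and then appeal to the slack in the estimates of Theorem 13 (the paper phrases this as ``the required refinement is obtained via $\varphi^{\ast}\approx\mathrm{CLIQ}_{2}$ as in Theorem 13'') to absorb the factor $\tfrac{1}{2}$. Your write-up is in fact more explicit than the paper about where that slack lives.
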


\begin{proof}
Without loss of generality assume that double graphs are represented by
arbitrary pairs of plain graphs $D=D\left( \varepsilon \right) =\left\langle
G\left( \varepsilon \right) ,G^{\prime }\right\rangle $ such that $G\left(
\varepsilon \right) =\left\{ \pi (i):i\in \left[ n\right] \,\&\,\,%
\varepsilon \left( i\right) =1\right\} $ and $G^{\prime }\subseteq \left\{
\pi (i):i\in \left[ n\right] \,\&\,\,\varepsilon \left( i\right) =0\right\} $%
, for any chosen $\varepsilon \in \mathrm{VA}$. For brevity we also write $%
D^{+}$ and $D^{-}$ for $G\left( \varepsilon \right) $ and $G^{\prime }$,
respectively, and denote by $\mathcal{D}$ the set of all $D=D\left(
\varepsilon \right) $ for $\varepsilon \in \mathrm{VA}$.

Consider Boolean circuits $C$ whose open sources are assigned with Boolean
constants and variables $v_{i}$ ($i\in \left[ n\right] $), and let $C\left(
\varepsilon \right) $ designate corresponding variable-free Boolean circuits
that are obtained by substituting $\varepsilon \left( i\right) $ for all $%
v_{i}$. Let $\left\| C\left( \varepsilon \right) \right\| \in
\left\{0,1\right\} $ denote the Boolean value of $C\left( \varepsilon
\right) $.

Now suppose that there exists a $C$ such that for every $\varepsilon \in 
\mathrm{VA}$, $C\left( \varepsilon \right) $ returns ``\textbf{true}'' iff $%
G\left( \varepsilon \right) $ contains a subgraph from $\mathrm{POS}$. In
our formalism this yields 
\begin{equation*}
\left( \forall \varepsilon \in \mathrm{VA}\right) \left( \left\| C\left(
\varepsilon \right) \right\| =1\Leftrightarrow CLIQ\left( \varepsilon
\right) \right) \qquad \left( \newline
3\right)
\end{equation*}
(cf. Introduction), provided that $C$ corresponds to Boolean formula $%
\varphi \in \mathcal{B}$.

Furthermore let $C^{\pm }$ denote a DMN circuit corresponding to DMN formula 
$\varphi ^{\ast }$. That is, $C^{\pm }$ has circuit structure of $\varphi
^{\ast }$ whose open sources are assigned with literals $v_{i}$ and/or $%
\lnot v_{i}$ occurring in $\varphi ^{\ast }$. For any $\varepsilon \in 
\mathrm{VA}$, $C^{\pm }\left( \varepsilon \right) $ will designate the
corresponding variable-free Boolean circuit obtained by substituting $%
\varepsilon \left( i\right) $ for all $v_{i}$, and let $\left\| C^{\pm
}\left( \varepsilon \right) \right\| \in \left\{ 0,1\right\} $ denote the
Boolean value of $C^{\pm }\left( \varepsilon \right) $. By Lemma 19, this
yields 
\begin{equation*}
\left( \forall \varepsilon \in \mathrm{VA}\right) \left\| C\left(
\varepsilon \right) \right\| =\left\| C^{\pm }\left( \varepsilon \right)
\right\| \qquad \qquad \qquad\ \ \ \ \ \left( \newline
2\right)
\end{equation*}
and 
\begin{equation*}
\left(\! \forall \varepsilon \in \mathrm{VA}\right) \left( 
\begin{array}{c}
\left\| C^{\pm }\left( \varepsilon \right) \right\| =1\Leftrightarrow \\ 
\left( \exists D\in \mathrm{DN}\left( C^{\pm }\right) \right) \left\|
C_{D}^{\pm }\left( \varepsilon \right) \right\| =1\smallskip
\end{array}
\!\right) \quad \left( 4\right)
\end{equation*}
which together with $\left( \newline
3\right) $ implies 
\begin{equation*}
\left( \forall \varepsilon \in \mathrm{VA}\right) \left( 
\begin{array}{c}
CLIQ\left( \varepsilon \right) \Leftrightarrow \\ 
\left( \exists D\in \mathrm{DN}\left( C^{\pm }\right) \right) \left\|
C_{D}^{\pm }\left( \varepsilon \right) \right\| =1\smallskip
\end{array}
\right) \quad \left( 5\right)
\end{equation*}
(cf. Introduction). Moreover, we prove another crucial equivalence 
\begin{equation*}
\left( \forall \varepsilon \in \mathrm{VA}\right) \left( CLIQ\left(
\varepsilon \right) \Leftrightarrow \left\| \mathrm{CLIQ}_{2}\right\|
_{\varepsilon }=1\smallskip \right) \qquad \qquad \left( 6\right)
\end{equation*}

$\vartriangleright $ $\smallskip \left\| \mathrm{CLIQ}_{2}\right\|
_{\varepsilon }=1\smallskip $

$\Leftrightarrow \left( \exists D\in \mathcal{D}\right) \left( \exists K\in 
\mathrm{POS}\right) \left( K\subseteq ^{+}D\,\&\left\| D\right\|
_{\varepsilon }=1\right) $

$\Leftrightarrow \left( \exists D\in \mathcal{D}\right) \left( \exists K\in 
\mathrm{POS}\right)$

$\qquad \qquad \qquad \qquad \left( \!\! 
\begin{array}{c}
K\subseteq D^{+}\& \\ 
\left\| \underset{\pi \left( i\right) \in D^{+}}{\bigwedge\!\! }v_{i}\wedge 
\underset{\pi \left( j\right) \in D^{-}}{\bigwedge }\!\!\lnot v_{j}\right\|
_{\varepsilon } \\ 
=1
\end{array}
\!\!\right) $

$\Leftrightarrow \left( \exists D\in \mathcal{D}\right) \left( \exists K\in 
\mathrm{POS}\right) $

$\qquad \qquad \qquad \qquad \left( 
\begin{array}{c}
K\subseteq D^{+}\,\& \\ 
\!\!\!\left. \! 
\begin{array}{c}
\left( \forall \pi \left( i\right) \in D^{+}\right) \varepsilon \left(
i\right) =1\,\& \\ 
\left( \forall \pi \left( j\right) \in D^{-}\right) \varepsilon \left(
j\right) =0
\end{array}
\right.
\end{array}
\!\!\!\!\right) $

$\smallskip\smallskip\Rightarrow \left( \exists D\in \mathcal{D}\right)
\left( \exists K\in \mathrm{POS}\right) \left( K\subseteq D^{+}\subseteq
G\left( \varepsilon \right) \right) $

$\Rightarrow CLIQ\left( \varepsilon \right) $

$\Rightarrow \left( \exists K\in \mathrm{POS}\right) \left( 
\begin{array}{c}
K\subseteq G\left( \varepsilon \right)\& \\ 
\left. 
\begin{array}{c}
\left( \forall \pi \left( i\right) \in G\left( \varepsilon \right) \right)
\varepsilon \left( i\right) =1\,\& \\ 
\left( \forall \pi \left( j\right) \in \emptyset \right) \varepsilon \left(
j\right) =0
\end{array}
\right.
\end{array}
\right) $\smallskip

$\Rightarrow \left( \exists K\!\in \!\mathrm{POS}\right) \left( 
\begin{array}{c}
\!K\!\subseteq ^{+}\!\!D\left( \varepsilon \right) :=\!\left\langle G\left(
\varepsilon \right) ,\emptyset \right\rangle\&, \\ 
\!\!\!\left. 
\begin{array}{c}
\left( \forall \pi \left( i\right) \in D\left( \varepsilon \right)
^{+}\right) \varepsilon \left( i\right) =1\,\& \\ 
\!\left( \forall \pi \left( j\right) \in D\left( \varepsilon \right)
^{-}\right) \varepsilon \left( j\right) =0
\end{array}
\!\!\right. \!
\end{array}
\right) $\smallskip

$\smallskip\smallskip\Leftrightarrow \left( \exists D\in \mathcal{D}\right)
\left( \exists K\in \mathrm{POS}\right) \left( K\subseteq ^{+}D\,\&\left\|
D\right\| _{\varepsilon }=1\right) $

$\smallskip \Leftrightarrow \left\| \mathrm{CLIQ}_{2}\right\| _{\varepsilon
}=1\vartriangleleft $

\smallskip which together with $\left( 5\right) $ implies 
\begin{equation*}
\left( \forall \varepsilon \in \mathrm{VA}\right) \left( 
\begin{array}{c}
\left\| \mathrm{CLIQ}_{2}\right\| _{\varepsilon }=1\Leftrightarrow \\ 
\left( \exists D\in \mathrm{DN}\left( C^{\pm }\right) \right) \left\|
C_{D}^{\pm }\left( \varepsilon \right) \right\| =1\smallskip
\end{array}
\right) \ \quad \left( 7\right)
\end{equation*}

It remains to observe that $\left( 7\right) $ is a circuit representation of
Lemma 18. Since $C$ and $C^{\pm }$ are respectively isomorphic to $\varphi $
and $\varphi ^{\ast } $, this completes the proof of Theorem.
\end{proof}

\begin{corollary}
It holds $\mathbf{NP}\nsubseteq \mathbf{P/poly}$. In particular $\mathbf{P}%
\neq \mathbf{NP}$.
\end{corollary}

\begin{proof}
Boolean circuit complexity is quadratic in derterministic time (cf. e.g. 
\cite{P}: Proposition 11.1, \cite{S}: Theorem 9.30). Hence the assertion
easily follows from Theorem 20 as CLIQUE$_{m,k}$ is a NP problem.
\end{proof}

\subsection{\protect\large Application}

Denote by $\mathcal{A}_{0}^{+}$ positive (monotone) subalgebra of $\mathcal{A%
}$ whose formulas are built up from variables and constants by positive
operations $\vee $ and $\wedge $. Thus CNF and/or DNF formulas $\varphi \in 
\mathcal{A} _{0}^{+}$ do not include negated variables.

\begin{theorem}
There is no polynomial time algorithm $f$ converting arbitrary CNF formulas $%
\varphi \in \mathcal{A}_{0}^{+}$ into equivalent DNF formulas $f\left(
\varphi \right) \in \mathcal{A}_{0}^{+}$.
\end{theorem}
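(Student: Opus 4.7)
The plan is to exhibit a concrete family of polynomial-size monotone CNF formulas whose equivalent DNF representations provably require exponentially many terms, thereby blocking any polynomial-time converter $f$. For each $s\le n/2$, set $\varphi_s:=\bigwedge_{i=1}^s\left(v_{2i-1}\vee v_{2i}\right)\in\mathcal{A}_0$; this formula has $\mathrm{cs}(\varphi_s)=O(s)$ and lies in $\mathcal{A}_0\subseteq\mathcal{A}$, so the general case and the monotone case stated in Theorem 27 follow uniformly from the same construction. Eventually I would take $s=m$ (note $2s\le n=\binom{m}{2}$ for large $m$) to turn the DNF lower bound into the desired contradiction.

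The main work is to show that every DNF $\psi\in\mathcal{A}$ equivalent to $\varphi_s$ has at least $2^s$ conjunctive terms, \emph{even} when $\psi$ is allowed to use negative literals. I would fix the $2^s$ minimal Boolean satisfying assignments $a_{\bar b}$ of $\varphi_s$, indexed by $\bar b=(b_1,\ldots,b_s)\in\{0,1\}^s$, where $a_{\bar b}$ sets $v_{2i-b_i}$ to $1$ for every $i$ and every other variable to $0$. For each such $a_{\bar b}$, some term $c$ of $\psi$ satisfies $c(a_{\bar b})=1$, so every positive literal of $c$ is a $1$-variable of $a_{\bar b}$ and every negative literal a $0$-variable of $a_{\bar b}$. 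The crux is that $c$ must contain a positive occurrence of every $1$-variable of $a_{\bar b}$: if some $v_{2i-b_i}$ failed to appear positively in $c$ (it cannot occur negatively either, since $a_{\bar b}(v_{2i-b_i})=1$), then flipping $v_{2i-b_i}$ to $0$ would yield an assignment $a'$ with $c(a')=1$ while $\varphi_s(a')=0$ (the $i$-th clause is falsified), contradicting $\psi\approx\varphi_s$. Hence the positive part of $c$ is exactly the $1$-set of $a_{\bar b}$, which in turn determines $\bar b$ uniquely, so distinct minimal assignments demand distinct terms.

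Putting the pieces together, any output of $f$ applied to $\varphi_s$ would be a DNF of size polynomial in $O(s)$, contradicting the $2^s$ lower bound for $s$ chosen large enough. The main potential obstacle is justifying the $2^s$ lower bound in the presence of negative literals in $\psi$ (so that $\psi$ need not be \emph{a priori} a monotone DNF even though $\varphi_s$ is monotone), but this is handled cleanly by the minimal-assignment argument above. A more paper-thematic alternative would invoke Corollary 21 directly: a polynomial Tseitin CNF of the $\mathbf{NP}$ verifier for $\mathrm{CLIQ}_2$ (over the edge variables together with auxiliary gate variables), once converted by $f$ to a poly-size DNF, would upon syntactic existential projection of the auxiliary variables (i.e.\ dropping those literals from every cube and discarding inconsistent cubes) yield a poly-size DNF solution of $\mathrm{CLIQ}_2$, contradicting Corollary 21; this route, however, requires enlarging $\mathcal{A}$ with auxiliary variables, so the elementary minimal-assignment approach seems preferable.
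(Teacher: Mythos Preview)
Your argument is correct, and it takes a genuinely different route from the paper's. The paper derives Theorem~27 as an \emph{application} of its main result $\mathbf{P}\neq\mathbf{NP}$ (Corollary~25): given a polynomial-time CNF-to-DNF converter $f$, one decides $\mathrm{SAT}$ in polynomial time by forming the CNF $(\lnot f(\varphi))^{\ast}$ and checking CNF validity (which is trivial), contradicting $\mathbf{P}\neq\mathbf{NP}$. Your proof is the classical unconditional blow-up argument: the family $\varphi_s=\bigwedge_{i=1}^{s}(v_{2i-1}\vee v_{2i})$ has $2^s$ minimal satisfying assignments, and the flipping argument forces each one to be witnessed by a term whose positive-literal set is exactly that assignment's $1$-set, so any equivalent DNF (even with negative literals) has $\geq 2^s$ distinct terms. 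Since a polynomial-time algorithm cannot write an exponentially long output, you are done.

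What each approach buys: yours is strictly more elementary and in fact \emph{unconditional}---it does not rely on the paper's circuit lower bound for $\mathrm{CLIQ}_2$ or on $\mathbf{P}\neq\mathbf{NP}$ at all, and it proves the stronger statement that the output \emph{size} (not merely the running time) must be exponential. The paper's route, by contrast, is deliberately chosen to exhibit Theorem~27 as a corollary of the main theorem; from the paper's perspective the point of the section is the dependence on Corollary~25, which your argument bypasses entirely. Your closing remark about the Tseitin/projection alternative is accurate, including the caveat about auxiliary variables lying outside $\mathcal{A}$; the paper's own reduction avoids that issue by going through $\mathrm{SAT}$ rather than through $\mathrm{CLIQ}_2$ directly.
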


\begin{proof}
Suppose that for every $\varepsilon :\left[ n\right] \rightarrow \left\{
0,1\right\},\newline
\left( \left\| \varphi \right\| _{\varepsilon }=1\Leftrightarrow \left\|
f\left( \varphi \right) \right\| _{\varepsilon }=1\Leftrightarrow \left\|
\lnot f\left( \varphi \right) \right\| _{\varepsilon }=0\right) $. Thus $%
\varphi \in \mathrm{SAT}\Leftrightarrow f\left( \varphi \right) \in \mathrm{%
SAT}\Leftrightarrow \lnot f\left( \varphi \right) \notin \mathrm{TAU}$.
Suppose that the size of $f\left( \varphi \right) $ is polynomial in that of 
$\varphi $. Note that $\lnot f\left( \varphi \right) \in \mathcal{B}$ is
equivalent to CNF formula $\left( \lnot f\left( \varphi \right) \right)
^{\ast }\in \mathcal{A}$ whose size is roughly the same as that of $f\left(
\varphi \right) $, and hence polynomial in the size of $\varphi $. \footnote{%
{\small {\footnotesize The difference between plain (linear) and circuit
length is inessential for CNF and/or DNF formulas under consideration.}}}
Also note that the validity problem $\left( \lnot f\left( \varphi \right)
\right) ^{\ast }\in ^{?}\mathrm{TAU}$ is solvable in polynomial time. Hence
so is the dual problem $\varphi \in ^{?}\mathrm{SAT}$, which by the NP
completeness of $\mathrm{SAT}$ yields $\mathbf{P}=\mathbf{NP}$, -- a
contradiction.
\end{proof}

\section{\protect\large Appendix A: On Lemma 9}

Let $\emptyset \neq G\in G^{\ell }$ and $R_{\subseteq }\left( G\right)
=\left\{ f\in \mathcal{F}:\,G\subseteq C_{f}\right\} $. To estimate $\left| 
\mathcal{R}_{\subseteq }\left( G\right) \right| $ we calculate the
probability that a coloring function $f\in \mathcal{F}$ is in $R_{\subseteq
}\left( G\right) $, i.e. every pair of nodes $x,y$ connected by an edge in $%
G $ is colored differently by $f\left( x\right) \neq f\left( y\right) <k$.
Therefore to color every next node in v$\left( G\right) $ we have to choose
an arbitrary color among those not previously used. This yields the
probability at least 
\begin{equation*}
\begin{array}{l}
\dfrac{k-1}{k-1}\times \dfrac{k-2}{k-1}\times \cdots \times \dfrac{%
k-1-\left| \text{\textsc{v\negthinspace }}\left( G\right) \right| }{k-1}%
>\smallskip \\ 
\left( \dfrac{k-1-\left| \text{\textsc{v\negthinspace }}\left( G\right)
\right| }{k-1}\right) ^{\left| \text{\textsc{v\negthinspace }}\left(
G\right) \right| } \geq \left( 1-\dfrac{\ell }{k-1}\right) ^{\ell
}>\smallskip \\ 
\left( 1-\dfrac{\ell }{k}\right) ^{\ell }>\smallskip \left( 1-\dfrac{1}{%
2\ell }\right) ^{\ell }\longrightarrow \frac{1}{\sqrt{e}}>\frac{1}{2}\text{
, }\smallskip \\ 
\text{as }k=2\ell ^{2}\longrightarrow \infty \text{ . }
\end{array}
\end{equation*}
Hence $\left| \mathcal{R}_{\subseteq }\left( G\right) \right| >\frac{1}{2}%
\left| \mathcal{F}\right| =\frac{1}{2}\left( k-1\right) ^{m}$, for
sufficiently large $k$. Now consider $R_{\nsubseteq }\left( G\right)
=\left\{ f\in \mathcal{F}:G\nsubseteq C_{f}\right\} =F\setminus R_{\subseteq
}\left( G\right) $ and make an obvious conclusion\newline
$\left| \mathcal{R}_{\nsubseteq }\left( G\right) \right| =\left| \mathcal{F}%
\right| -\left| {R}_{\subseteq }\left( G\right) \right| \leq \frac{1}{2}%
\left| \mathcal{F}\right| =\frac{1}{2}\left( k-1\right) ^{m}$. Consequently,
for any $D\in D^{\ell }$ we have \newline
$\left| \mathcal{R}_{\subseteq }\left( D\right) \right|\! \! =\! \! \left|
\left\{ f\in \mathcal{F}:D^{+}\subseteq C_{f}\right\} \right| \! \! >\! \!
\left| \mathcal{R}_{\subseteq }\left( D^{+}\right) \right| \! \! >\dfrac{1}{2%
}\left| \mathcal{F}\right| =\frac{1}{2}\left( k-1\right) ^{m}$, and hence $%
\left| \mathcal{R}_{\nsubseteq }\left( D\right) \right| \! \! \leq \! \! 
\dfrac{1}{2}\left| \mathcal{F}\right|\! \! =\! \! \frac{1}{2}\left(
k-1\right) ^{m}$.

Generally, for any $\mathcal{X}\subseteq \mathcal{F}$ we set $\mathcal{R}%
_{\subseteq }\left( \mathcal{X}:G\right) :=\left\{ f\in \mathcal{X}%
:G\subseteq C_{f}\right\} $ and $R_{\nsubseteq }\left( \mathcal{X}:G\right)
:=\left\{ f\in \mathcal{X}:G\nsubseteq C_{f}\right\} $. Then analogously $%
\left| \mathcal{R}_{\subseteq }\left( \mathcal{X}:G\right) \right| \geq 
\frac{1}{2}\left| \mathcal{X}\right| $ and $\left| \mathcal{R}_{\nsubseteq
}\left( \mathcal{X}:G\right) \right| \leq \frac{1}{2}\left| \mathcal{X}%
\right| $, provided that $\left| \mathcal{X}\left( x\right) \right| =k-1$
holds for any $x\in \,$v\negthinspace \thinspace $\left( G\right) $, where $%
\mathcal{X}\left( x\right) $ abbreviates $\left\{ f\left( x\right) :f\in 
\mathcal{X}\right\} $. Furthermore, for any $D\in \mathcal{D}^{\ell }$ we
set $\mathcal{R}_{\nsubseteq }\left( \mathcal{X}:D\right) :=\left\{ f\in 
\mathcal{X}:D^{+}\nsubseteq C\right\} $ and obtain $\left| \mathcal{R}%
_{\nsubseteq }\left( \mathcal{X}:D\right) \right| \leq \dfrac{1}{2}\left| 
\mathcal{X}\right| $, if $\left| \mathcal{X}^{+}\left( x\right) \right| =k-1$
for any $x\in \,$\textsc{v}\negthinspace \thinspace $\left( D^{+}\right) $.
Note that $\mathcal{R}_{\nsubseteq }\left( \mathcal{F}:D\right) =\mathcal{R}%
_{\nsubseteq }\left( D\right) $.

Consider any collection $D_{1},\cdots ,D_{q}\in \mathcal{D}^{\ell }$, $%
\left( \forall i\neq j\in \left[ q\right] \right) D_{i}^{+}\cap
D_{j}^{+}=\emptyset $. Then $\left| \overset{q}{\underset{i=1}{\bigcap }}%
\mathcal{R}_{\nsubseteq }\left( D_{i}\right) \right| \leq 2^{-q}\left| 
\mathcal{F}\right| $ will easily follow from 
\begin{equation*}
\left( \forall j\in \left[ q-1\right] \right) \!\left( \left| \overset{q}{%
\underset{i=j}{\bigcap }}\mathcal{R}_{\nsubseteq }\left( \mathcal{X\!}%
:\!D_{i}\right) \right| \leq 2^{-q}\!\left| \mathcal{X}\right| \right)
\qquad \left( \ast \right)
\end{equation*}
provided that $\mathcal{X}\subseteq \mathcal{F}$ satisfies $\left| \mathcal{X%
}^{+}\left( x_{i}\right) \right| =k-1$ for all $x_{i}\in \,$\textsc{v}%
\negthinspace \thinspace $\left( D_{i}^{+}\right) $, $i\in \left[ q\right] $%
. Now $\left( \ast \right) $ is proved as follows by induction on $q$.

\textbf{Basis} : $q=2$. Since $D_{1}^{+}\cap D_{2}^{+}=\emptyset $, for any $%
x_{1}\in \,$\textsc{v}\negthinspace \thinspace $\left( D_{i}^{+}\right) $, $%
x_{2}\in \,$\textsc{v}\negthinspace \thinspace $\left( D_{2}^{+}\right) $\
we have $\left| \mathcal{R}_{\nsubseteq }\left( \mathcal{X}:D_{1}\right)
\left( x_{2}\right) \right| =\left| \mathcal{X}\left( x_{2}\right) \right| $
and $\left| \mathcal{X}\left( x_{1}\right) \right| =\left| \mathcal{X}\left(
x_{2}\right) \right| =k-1$. This yields 
\begin{equation*}
\begin{array}{l}
\left| \mathcal{R}_{\nsubseteq }\!\left( \mathcal{X\!}\!:D_{1}\right) \!\cap
\!\mathcal{R}_{\nsubseteq }\!\left( \mathcal{X\!}:\!D_{2}\right) \right|
\!=\!\left| \mathcal{R}_{\nsubseteq }\!\left( \mathcal{R}_{\nsubseteq
}\!\left( \mathcal{X}\!:\!D_{1}\right) \!:\!D_{2}\right) \right| \\ 
\!\leq \!\dfrac{1}{2}\left| \mathcal{R}_{\nsubseteq }\!\left( \mathcal{X}%
\!:\!D_{1}\right) \right| \!\leq \!\dfrac{1}{4}\left| \mathcal{X}\right|
\end{array}
\end{equation*}

\textbf{Induction step}. By the same token we obtain 
\begin{equation*}
\begin{array}{l}
\ \ \left| \overset{q}{\underset{i=j}{\bigcap }}\mathcal{R}_{\nsubseteq
}\left( \mathcal{X\!}:\!D_{i}\right) \right| \!=\!\left| \overset{q-1}{%
\underset{i=j}{\bigcap }}\mathcal{R}_{\nsubseteq }\left( \mathcal{X\!}%
:\!D_{i}\right) \cap \mathcal{R}_{\nsubseteq }\left( \mathcal{X\!}%
:\!D_{q}\right) \right| \\ 
=\left| \mathcal{R}_{\nsubseteq }\left( \overset{q-1}{\underset{i=j}{\bigcap 
}}\mathcal{R}_{\nsubseteq }\left( \mathcal{X\!}:\!D_{i}\right) :D_{q}\right)
\right| \!\!\!\!\leq \!\!\!\!\dfrac{1}{2}\left| \overset{q-1}{\underset{i=j}{%
\bigcap }}\mathcal{R}_{\nsubseteq }\left( \mathcal{X\!}:\!D_{i}\right)
\right| \\ 
\leq 2^{-q}\left| \mathcal{X}\right| \text{.}
\end{array}
\end{equation*}

\section{\protect\large Appendix B: Proof of Lemma 12}

Use Lemma 3 and inclusion\smallskip\ $A\setminus B\subseteq \left(
A\setminus C\right) \cup \left( C\setminus B\right) \medskip $.

1. $\ \partial ^{\text{\textsc{p}}}\left( \sigma \vee \tau \right) =$

$\mathrm{AC}^{\text{\textsc{p}}}\left( \mathrm{DN}\left( \sigma \right)
\!\cup \!\mathrm{DN}\left( \tau \right) \right)\setminus\mathrm{AC}^{\text{%
\textsc{p}}}\left( \mathrm{AP}\left( \sigma \right) \!\sqcup \!\mathrm{AP}%
\left( \tau \right) \right) \smallskip$ $
\begin{array}{l}
\subseteq\!\mathrm{AC}^{\text{\textsc{p}}}\!\left( \mathrm{DN}\left( \sigma
\right) \!\cup \mathrm{DN}\left( \tau \right) \right) \setminus \!\mathrm{AC}%
^{\text{\textsc{p}}}\!\left( \mathrm{AP}\!\left( \sigma \right) \!\sqcup \!%
\mathrm{AC}^{\text{\textsc{p}}}\!\left( \mathrm{AP}\!\left( \tau \right)
\right) \right) \\ 
\cup\ \mathrm{AC}^{\text{\textsc{p}}}\!\left( \mathrm{AP}\left( \sigma
\right) \!\sqcup \!\mathrm{AC}^{\text{\textsc{p}}}\!\left( \mathrm{AP}\left(
\tau \right) \right) \right) \setminus \!\mathrm{AC}^{\text{\textsc{p}}%
}\!\left( \mathrm{AP}\left( \sigma \right) \!\sqcup \!\mathrm{AP}\left( \tau
\right) \right)
\end{array}
\smallskip $ $
\begin{array}{l}
=\mathrm{AC}^{\text{\textsc{p}}}\!\left( \mathrm{DN}\left( \sigma \right)
\!\cup \mathrm{AC}^{\text{\textsc{p}}}\left( \tau \right) \right) \setminus
\!\mathrm{AC}^{\text{\textsc{p}}}\!\left( \mathrm{AP}\!\left( \sigma \right)
\!\cup \!\mathrm{AC}^{\text{\textsc{p}}}\!\left( \mathrm{AP}\!\left( \tau
\right) \right) \right) \\ 
\cup \ \mathrm{AC}^{\text{\textsc{p}}}\!\left( \mathrm{AP}\left( \sigma
\right) \!\cup \!\mathrm{AC}^{\text{\textsc{p}}}\!\left( \mathrm{AP}\left(
\tau \right) \right) \right) \setminus \!\mathrm{AC}^{\text{\textsc{p}}%
}\!\left( \mathrm{AP}\left( \sigma \right) \!\sqcup \!\mathrm{AP}\left( \tau
\right) \right)
\end{array}
\smallskip $ $
\begin{array}{l}
\subseteq \!\mathrm{AC}^{\text{\textsc{p}}}\!\left( \sigma \right)
\!\setminus \!\mathrm{AC}^{\text{\textsc{p}}}\!\left( \mathrm{AP}\!\left(
\sigma \right) \!\right) \cup \mathrm{AC}^{\text{\textsc{p}}}\left( \tau
\right) \setminus \mathrm{AC}^{\text{\textsc{p}}}\left( \mathrm{AP}\left(
\tau \right) \right) \\ 
\cup \ \mathrm{AC}^{\text{\textsc{p}}}\!\left( \mathrm{AP}\!\left( \sigma
\right) \right) \cup \mathrm{AC}^{\text{\textsc{p}}}\!\left( \mathrm{AP}%
\!\left( \tau \right) \right) \setminus \!AC^{\text{\textsc{p}}}\!\left( 
\mathrm{AP}\!\left( \sigma \right) \sqcup \mathrm{AP}\!\left( \tau \right)
\right)
\end{array}
\smallskip $ $=\partial ^{\text{\textsc{p}}}\left( \sigma \right) \cup
\partial ^{\text{\textsc{p}}}\left( \tau \right) \cup \partial _{\sqcup }^{%
\text{\textsc{p}}}\left( \mathrm{AP}\left( \sigma \right) ,\mathrm{AR}\left(
\tau \right) \right) .\smallskip $

2. $\ \ \partial ^{\text{\textsc{p}}}\left( \sigma \wedge \tau \right) =$

$\mathrm{AC}^{\text{\textsc{p}}}\left( \mathrm{DN}\left( \sigma \right)
\odot \mathrm{DN}\left( \tau \right) \right) \setminus \mathrm{AC}^{\text{%
\textsc{p}}}\left( \mathrm{AP}\left( \sigma \right) \sqcap \mathrm{AP}\left(
\tau \right) \right) \smallskip $ $
\begin{array}{l}
\subseteq \!\mathrm{AC}^{\text{\textsc{p}}}\!\left( \mathrm{DN}\left( \sigma
\right) \!\odot \mathrm{DN}\left( \tau \right) \right) \setminus \!\mathrm{AC%
}^{\text{\textsc{p}}}\!\left( \mathrm{AP}\!\left( \sigma \right) \!\cap \!%
\mathrm{AC}^{\text{\textsc{p}}}\!\left( \mathrm{AP}\!\left( \tau \right)
\right) \right) \\ 
\cup \ \mathrm{AC}^{\text{\textsc{p}}}\!\left( \mathrm{AP}\left( \sigma
\right) \!\cap \!\mathrm{AC}^{\text{\textsc{p}}}\!\left( \mathrm{AP}\left(
\tau \right) \right) \right) \setminus \!\mathrm{AC}^{\text{\textsc{p}}%
}\!\left( \mathrm{AP}\left( \sigma \right) \sqcap \!\mathrm{AP}\left( \tau
\right) \right)
\end{array}
\smallskip $ $
\begin{array}{l}
=\mathrm{AC}^{\text{\textsc{p}}}\!\left( \sigma \right) \!\cap \mathrm{AC}^{%
\text{\textsc{p}}}\left( \tau \right) \setminus \!\mathrm{AC}^{\text{\textsc{%
p}}}\!\left( \mathrm{AP}\!\left( \sigma \right) \!\cap \!\mathrm{AC}^{\text{%
\textsc{p}}}\!\left( \mathrm{AP}\!\left( \tau \right) \right) \right) \\ 
\cup \ \mathrm{AC}^{\text{\textsc{p}}}\!\left( \mathrm{AP}\left( \sigma
\right) \!\cup \!\mathrm{AC}^{\text{\textsc{p}}}\!\left( \mathrm{AP}\left(
\tau \right) \right) \right) \setminus \!\mathrm{AC}^{\text{\textsc{p}}%
}\!\left( \mathrm{AP}\left( \sigma \right) \!\sqcap \mathrm{AP}\left( \tau
\right) \right)
\end{array}
\smallskip $ $
\begin{array}{l}
\subseteq \!\mathrm{AC}^{\text{\textsc{p}}}\!\left( \sigma \right)
\!\setminus \!\mathrm{AC}^{\text{\textsc{p}}}\!\left( \mathrm{AP}\!\left(
\sigma \right) \!\right) \cup \mathrm{AC}^{\text{\textsc{p}}}\left( \tau
\right) \setminus \mathrm{AC}^{\text{\textsc{p}}}\left( \mathrm{AP}\left(
\tau \right) \right) \\ 
\cup \ \ \partial _{\sqcap }^{\text{\textsc{p}}}\left( \mathrm{AP}\left(
\sigma \right) ,\mathrm{AR}\left( \tau \right) \right)
\end{array}
\smallskip $ $=\partial ^{\text{\textsc{p}}}\left( \sigma \right) \cup
\partial ^{\text{\textsc{p}}}\left( \tau \right) \cup \partial _{\sqcap }^{%
\text{\textsc{p}}}\left( \mathrm{AP}\left( \sigma \right) ,\mathrm{AR}\left(
\tau \right) \right) .\smallskip $

3. $\ \partial ^{\text{\textsc{n}}}\left( \sigma \vee \tau \right) =$

$\mathrm{AC}^{\text{\textsc{n}}}\left( \mathrm{AP}\left( \sigma \right)
\sqcup \mathrm{AP}\left( \tau \right) \right) \setminus \mathrm{AC}^{\text{%
\textsc{n}}}\left( \mathrm{DN}\left( \sigma \right) \cup \mathrm{DN}\left(
\tau \right) \right) \smallskip $ $
\begin{array}{l}
\subseteq \!\mathrm{AC}^{\text{\textsc{n}}}\left( \mathrm{AP}\left( \sigma
\right) \!\sqcup \!\mathrm{AP}\left( \tau \right) \right) \!\setminus \!%
\mathrm{AC}^{\text{\textsc{n}}}\left( \mathrm{AP}\left( \sigma \right)
\right) \!\cup \!\mathrm{AC}^{\text{\textsc{n}}}\left( \mathrm{AP}\left(
\tau \right) \right) \\ 
\cup \ \mathrm{AC}^{\text{\textsc{n}}}\!\left( \mathrm{AP}\!\left( \sigma
\right) \right) \!\cup \!\mathrm{AC}^{\text{\textsc{n}}}\!\left( \mathrm{AP}%
\!\left( \tau \right) \right) \setminus \mathrm{AC}^{\text{\textsc{n}}%
}\!\left( \mathrm{DN}\left( \sigma \right) \!\cup \!\mathrm{DN}\left( \tau
\right) \right)
\end{array}
\smallskip $ $
\begin{array}{l}
=\mathrm{AC}^{\text{\textsc{n}}}\!\left( \mathrm{AP}\!\left( \sigma \right)
\sqcup \mathrm{AP}\!\left( \tau \right) \right) \!\setminus \!\mathrm{AC}^{%
\text{\textsc{n}}}\!\left( \mathrm{AP}\!\left( \sigma \right) \right) \cup 
\mathrm{AC}^{\text{\textsc{n}}}\!\left( \mathrm{AP}\!\left( \tau \right)
\right) \\ 
\cup \ \mathrm{AC}^{\text{\textsc{n}}}\left( \mathrm{AP}\left( \sigma
\right) \right) \cap \mathrm{AC}^{\text{\textsc{n}}}\left( \mathrm{AP}\left(
\tau \right) \right) \setminus \left[ \mathrm{AC}^{\text{\textsc{n}}}\left(
\sigma \right) \cup \mathrm{AC}^{\text{\textsc{n}}}\left( \tau \right) 
\right]
\end{array}
\smallskip $ $
\begin{array}{l}
\subseteq \!\partial _{\sqcup }^{\text{\textsc{n}}}\left( \mathrm{AP}\left(
\sigma \right) ,\mathrm{AR}\left( \tau \right) \right) \cup AC^{\text{%
\textsc{n}}}\left( \mathrm{AP}\left( \sigma \right) \right) \setminus 
\mathrm{AC}^{\text{\textsc{n}}}\left( \sigma \right) \\ 
\cup \ \mathrm{AC}^{\text{\textsc{n}}}\left( \mathrm{AP}\left( \tau \right)
\right) \setminus \mathrm{AC}^{\text{\textsc{n}}}\left( \tau \right)
\end{array}
\smallskip $ $=\partial _{\sqcup }^{\text{\textsc{n}}}\left( \mathrm{AP}%
\left( \sigma \right) ,\mathrm{AR}\left( \tau \right) \right) \cup \partial
^{\text{\textsc{n}}}\left( \sigma \right) \cup \partial ^{\text{\textsc{n}}%
}\left( \tau \right) .\smallskip $

4. $\ \partial ^{\text{\textsc{n}}}\left( \sigma \wedge \tau \right) =$

$\mathrm{AC}^{\text{\textsc{n}}}\left( \mathrm{AP}\left( \sigma \right)
\sqcap \mathrm{AP}\left( \tau \right) \right) \setminus \mathrm{AC}^{\text{%
\textsc{n}}}\left( \mathrm{DN}\left( \sigma \right) \odot \mathrm{DN}\left(
\tau \right) \right) \smallskip $ $
\begin{array}{l}
\subseteq \!\mathrm{AC}^{\text{\textsc{n}}}\left( \mathrm{AP}\left( \sigma
\right) \!\sqcap \!\mathrm{AP}\left( \tau \right) \right) \!\setminus \!%
\mathrm{AC}^{\text{\textsc{n}}}\left( \mathrm{AP}\left( \sigma \right)
\right) \!\cap \!\mathrm{AC}^{\text{\textsc{n}}}\left( \mathrm{AP}\left(
\tau \right) \right) \\ 
\cup \ \mathrm{AC}^{\text{\textsc{n}}}\!\left( \mathrm{AP}\!\left( \sigma
\right) \right) \!\cap \mathrm{AC}^{\text{\textsc{n}}}\!\left( \mathrm{AP}%
\!\left( \tau \right) \right) \setminus \mathrm{AC}^{\text{\textsc{n}}%
}\!\left( \mathrm{DN}\left( \sigma \right) \!\cap \!\mathrm{DN}\left( \tau
\right) \right)
\end{array}
\smallskip $ $
\begin{array}{l}
=\mathrm{AC}^{\text{\textsc{n}}}\!\left( \mathrm{AP}\!\left( \sigma \right)
\sqcap \mathrm{AP}\!\left( \tau \right) \right) \!\setminus \!\mathrm{AC}^{%
\text{\textsc{n}}}\!\left( \mathrm{AP}\!\left( \sigma \right) \right) \cap 
\mathrm{AC}^{\text{\textsc{n}}}\!\left( \mathrm{AP}\!\left( \tau \right)
\right) \\ 
\cup \ \mathrm{AC}^{\text{\textsc{n}}}\left( \mathrm{AP}\left( \sigma
\right) \right) \cup \mathrm{AC}^{\text{\textsc{n}}}\left( \mathrm{AP}\left(
\tau \right) \right) \setminus \left[ \mathrm{AC}^{\text{\textsc{n}}}\left(
\sigma \right) \cap \mathrm{AC}^{\text{\textsc{n}}}\left( \tau \right) 
\right]
\end{array}
\smallskip $ $
\begin{array}{l}
\subseteq \!\partial _{\sqcap }^{\text{\textsc{n}}}\left( \mathrm{AP}\left(
\sigma \right) ,\mathrm{AR}\left( \tau \right) \right) \cup \mathrm{AC}^{%
\text{\textsc{n}}}\left( \mathrm{AP}\left( \sigma \right) \right) \setminus 
\mathrm{AC}^{\text{\textsc{n}}}\left( \sigma \right) \smallskip \\ 
\cup \ \mathrm{AC}^{\text{\textsc{n}}}\left( \mathrm{AP}\left( \tau \right)
\right) \setminus \mathrm{AC}^{\text{\textsc{n}}}\left( \tau \right)
\end{array}
\smallskip $ $=\partial _{\sqcap }^{\text{\textsc{n}}}\left( \mathrm{AP}%
\left( \sigma \right) ,\mathrm{AR}\left( \tau \right) \right) \cup \partial
^{\text{\textsc{n}}}\left( \sigma \right) \cup \partial ^{\text{\textsc{n}}%
}\left( \tau \right) .\smallskip .$

\section{\protect\large Appendix C: Basic (in)equalities}

We have $k=m^{\frac{1}{4}}=2\ell ^{2},\ p=\ell \log _{2}\!m,\ L=\left(
p\!-\!1\right) ^{\ell }\ell !$, where $m\gg 0.$ So $\ell =\frac{1}{\sqrt{2}}%
m^{\frac{1}{8}}$, and hence

\QTP{Body Math}
$\ell !\thickapprox \!\sqrt{2\pi \ell }\left( \dfrac{\ell }{e}\right) ^{\ell
}\!=\sqrt{\!\sqrt{2}\pi m^{\frac{1}{8}}}\!\left( \dfrac{m^{\frac{1}{8}}}{%
\sqrt{2}e}\right) ^{\!\frac{1}{\sqrt{2}}m^{\frac{1}{8}}}$

$<m^{\frac{1}{16}+\frac{1}{8\sqrt{2}}m^{\frac{1}{8}}}\!<m^{\frac{1}{11}m^{%
\frac{1}{8}}}$, $m\gg 0.$ \smallskip

So\ \fbox{$\ell !<\!m^{\frac{1}{11}m^{\frac{1}{8}}}$}, while\ \smallskip 
\fbox{$\log_{2}m<m^{\alpha }$}

for any chosen $\alpha >0$.\smallskip

Now $p=\ell \log _{2}m<m^{\frac{1}{11}+\alpha }<m^{\frac{1}{10}}$, and hence
\smallskip

\fbox{$\left( p-1\right) ^{\ell }<p^{\ell }<m^{\frac{1}{10\sqrt{2}}m^{\frac{1%
}{8}}}<m^{\frac{1}{14}m^{\frac{1}{8}}}$}, while\smallskip

\fbox{$2^{p}=m^{\ell }=m^{\frac{1}{\sqrt{2}}m^{\frac{1}{8}}}$}. \smallskip

Thus $L\!=\!\left( p\!-\!1\right) ^{\ell }\ell ! \!<\!m^{\frac{1}{14}m^{%
\frac{1}{8}}}\!m^{\frac{1}{11}m^{\frac{1}{8}}}\!<\!m^{\frac{3}{50}m^{\frac{1%
}{8}}}$,

and hence \fbox{$L^{2}<m^{\frac{3}{25}m^{\frac{1}{8}}}$}. Moreover

$\left( \dfrac{m-\ell }{k}\right) ^{\ell }=\left( \dfrac{m-\frac{1}{\sqrt{2}}%
m^{\frac{1}{8}}}{m^{\frac{1}{4}}}\right) ^{\frac{1}{\sqrt{2}}m^{\frac{1}{8}%
}}\!\!>m^{\frac{1}{2\sqrt{2}}m^{\frac{1}{8}}}$,

and hence:\smallskip

\fbox{$\left( \dfrac{m\!-\!\ell }{k}\right) ^{\ell }\!L^{-2}\!>\!\dfrac{m^{%
\frac{1}{2\sqrt{2}}m^{\frac{1}{8}}}}{m^{\frac{3}{25}m^{\frac{1}{8}}}}\!>\!m^{%
\frac{1}{5}m^{\frac{1}{8}}}$}\smallskip

\fbox{$2^{p-1}L^{-2}\!>\!\frac{1}{2}\dfrac{m^{\frac{1}{\sqrt{2}}m^{\frac{1}{8%
}}}}{m^{\frac{3}{25}m^{\frac{1}{8}}}}\!>\!m^{\frac{1}{2}m^{\frac{1}{8}%
}}>\!m^{\frac{1}{5}m^{\frac{1}{8}}}$}. %\newline
%$\blacksquare \square \blacksquare \square \blacksquare \square \blacksquare
%\square \blacksquare \square \blacksquare \square \blacksquare \square
%\blacksquare \square \blacksquare \square \blacksquare \square \blacksquare
%\square \blacksquare \square \blacksquare \square \blacksquare \square
%\blacksquare \square\blacksquare \square \blacksquare \square \blacksquare
%\square \blacksquare \square \blacksquare \square \blacksquare \square
%\blacksquare \square \blacksquare\square $
%https://sketis.net/isabelle/isabelle-workshop-2022
%https://files.sketis.net/Isabelle_Workshop_2022/Isabelle_2022_paper_5.pdf
%https://files.sketis.net/Isabelle_Workshop_2022/Isabelle_2022_slides_5.pdf%

\end{document}